\def	\imp	{\Rightarrow}
\def	\flc	{\rightarrow}
\def\RuleH#1#2#3{\mathop{%
\hbox{\setbox0=\hbox{$\scriptstyle{#1\quad}$}{$%
\langle#1\rangle%
\mathrel{\smash{\mathop{\setbox1=\hbox to \wd0{$\lhook\kern-2pt$\rightarrowfill}\ht1=3pt\dp1=-2pt\box1}\limits^{#2}}}%
                 \langle#3\rangle$}}}}
\def	\overrel#1#2{\mathrel{\mathop{\kern0pt #2}\limits^{#1}}}
\def	\underrel#1#2{\mathrel{\mathop{\kern0pt #1}\limits_{#2}}}
\def	\bothrel#1#2#3{\mathrel{\mathop{\kern0pt #2}\limits^{#1}_{#3}}}
\def\by#1{\mathop{{\hbox{\setbox0=\hbox{$\scriptstyle{#1\quad}$}{$%
\mathrel{\mathop{\setbox1=\hbox to \wd0{\rightarrowfill}\ht1=3pt\dp1=-2pt\box1}\limits^{#1}}%
$}}}}}
\def\newarrow#1{\mathop{{\hbox{\setbox0=\hbox{$\scriptstyle{#1\quad}$}{$%
\mathrel{\mathop{\setbox1=\hbox to \wd0{\rightarrowfill}\ht1=3pt\dp1=-2pt\box1}\limits^{#1}}%
$}}}}}
\def\snewarrow#1{\mathop{{\hbox{\setbox0=\hbox{$\scriptstyle{#1\quad}$}{$%
\mathrel{\smash{\mathop{\setbox1=\hbox to \wd0{\rightarrowfill}\ht1=3pt\dp1=-2pt\box1}\limits^{#1}}}%
$}}}}}
\def\fnewarrow#1{\mathop{{\hbox{\setbox0=\hbox{$\scriptscriptstyle{#1\quad}$}{$\buildrel{\smash{\>\scriptscriptstyle{#1}\>}}\over{\hbox to \wd0{\rightarrowfill}}$}}}}}
\def\smallarrow#1{\mathop{{\hbox{\setbox0=\hbox{$\scriptstyle{#1\quad}$}{$\buildrel{\,#1\>}\over{\hbox to \wd0{\rightarrowfill}}$}}}}}
\def\Newarrow#1#2{%
\setbox0=\hbox{$\scriptstyle{#1\quad}$}%
\mathrel{\mathop{\hbox to \wd0{\rightarrowfill}}\limits^{#1}_{#2}}%
}
\def\SNewarrow#1#2{%
\setbox0=\hbox{$\scriptstyle{#1\quad}$}%
\mathrel{\smash{\mathop{\setbox1=\hbox to \wd0{\rightarrowfill}\ht1=3pt\dp1=-2pt\box1}\limits^{#1}_{#2}}}%
}
\def\pmb#1{\setbox0=\hbox{#1}%
        \kern-0.05em\copy0\kern-\wd0
        \kern.05em\copy0\kern-\wd0
        \kern-.025em\raise.0433em\box0}
\newcounter{zeile}
\newbox\kasten
\let\graph=\par
\gdef\algo{\catcode`\~=\active
	\catcode`\[=\active\catcode`\>=\active\setcounter{zeile}{0}
	\def\par{\refstepcounter{zeile}\graph\noindent\kern\wd\kasten%
		\llap{{\small\thezeile}}\quad}
	\def[##1]{{\bf##1}}\def~##1~{\mathchar"405B##1\mathchar"505D}
	\def>{\quad}\obeylines}}
\font\tencyr=wncyr10    
   \font\sevencyr=wncyr7
     \font\fivecyr=wncyr5
\newcommand{\Lra}{\Leftrightarrow}
\newcommand{\incl}{\subseteq}
\newcommand{\longhook}{\hookrightarrow}
\newcommand{\limp}{\Longrightarrow}
\newcommand{\semcro}[1]{\mbox{$[ \! [ #1 ] \! ]$}}
\newcommand{\semdl}[1]{\langle\!\langle#1\rangle \! \rangle}
\newcommand{\sep}{. \;}
\theoremstyle{plain}\newtheorem{theorem}[thm]{Theorem}
\theoremstyle{plain}
\newcommand{\pre}{{\sf pre}}
\newcommand{\post}{{\sf post}}
\newcommand{\syst}{{\sf CPN}}
\newcommand{\cml}{{\sf CML}}
\newcommand{\fo}{{\sf FO}}
\newcommand{\vv}[1]{\overrightarrow{#1}}
\newcommand{\Erule}{\hookrightarrow} 
\newenvironment{cmrs}{%
  \begin{displaymath}
    \begin{array}{lrcll}
}{
    \end{array}
  \end{displaymath}
}
\newcommand{\cmrsrule}[3]{#1 & \Erule & #2 & ~:~ #3 }
\def\doi{5 (2:3) 2009}
\begin{document}

\title[]{A Generic Framework for Reasoning about \\ Dynamic Networks of Infinite-State Processes\rsuper*}

\author[A.~Bouajjani]{Ahmed Bouajjani}
\address{LIAFA, University Paris Diderot and CNRS, Case 7014, 75205 Paris Cedex 13, France.}
\email{\{abou,cezarad,cenea,jurski,sighirea\}@liafa.jussieu.fr}
\thanks{This work is partially supported by the French ANR project AVERISS}

\author[C.~Dr\u{a}goi]{Cezara Dr\u{a}goi}

\author[C.~Enea]{Constantin Enea}

\author[Y.~Jurski]{Yan Jurski}

\author[M.~Sighireanu]{Mihaela Sighireanu}

\keywords{dynamic networks, colored Petri nets, first-order logic, verification}
\subjclass{E.1, F.3.1, F.4.1, F.4.3, I.2.2}
\titlecomment{{\lsuper*}A shorter version of this paper has been published in the Proceedings of TACAS 2007, LNCS 4424.}


\begin{abstract}
We propose a framework for reasoning about unbounded dynamic networks of infinite-state processes.
We propose Constrained Petri Nets ($\syst$) as generic models for these networks. They can be seen as Petri nets where tokens (representing occurrences of processes) are colored by values over some potentially infinite data domain such as integers, reals, etc. 
Furthermore, we define a logic, called $\cml$ (colored markings logic), for the description of $\syst$ configurations. $\cml$ is a first-order logic over tokens allowing to reason about their locations and their colors. Both $\syst$s and $\cml$ are parametrized by a color logic allowing to express constraints on the colors (data) associated with tokens.

We investigate the decidability of the satisfiability problem of $\cml$ and its applications in the verification of $\syst$s. We identify a fragment of $\cml$ for which the satisfiability problem is decidable (whenever it is the case for the underlying color logic), and which is closed under the computations of $\post$ and $\pre$ images for $\syst$s. These results can be used for several kinds of analysis such as invariance checking, pre-post condition reasoning, and bounded reachability analysis.
\end{abstract}

\maketitle



\section{Introduction}
\label{sect-intro} 

The verification of software systems requires in general the consideration of infinite-state models. The sources of infinity in software models are multiple. One of them is the manipulation of variables and data structures ranging over infinite domains (such as integers, reals, arrays, etc). Another source of infinity is the fact that the number of processes running in parallel in the system can be either a parameter (fixed but arbitrarily large), or it can be dynamically changing due to process creation.
While the verification of parameterized systems requires reasoning uniformly about the infinite family of (static) networks corresponding to any possible number of processes, the verification of dynamic systems requires reasoning about the infinite number of all possible dynamically changing network configurations.

There are many works and several approaches on the verification of infinite-state systems taking into account either the aspects related to infinite data domains, or the aspects related to unbounded network structures  due to parametrization or dynamic creation of processes. Concerning systems with data manipulation, a lot of work has been devoted to the verification of, for instance, finite-structure systems with unbounded counters, clocks, stacks, queues, etc. 
(see, e.g., \cite{AJ,BEM97,WB98,Boigelot,AAB00,FS01,FL02}). On the other hand, a lot of work has been done for the verification of parameterized and dynamic networks of Boolean (or finite-data domain) processes, proposing either exact model-checking and reachability analysis techniques for specific classes of systems (such as broadcast protocols, multithreaded programs, etc) \cite{EN98,EFM99,DRB02,BT05,BMOT05}, or generic algorithmic techniques (which can be approximate, or not guaranteed to terminate) such as network invariants-based approaches \cite{WL89,CGJ97}, and (abstract) regular model checking \cite{RMC,Bou01,SurveyRMC,BHV04}.
However, only few works consider both infinite data manipulation and parametric/dynamic network structures (see the paragraph on related work). 

In this paper, we propose a generic framework for reasoning about parameterized and dynamic networks of concurrent processes which can manipulate (local and global) variables over infinite data domains. Our framework is parameterized by a data domain and a first-order theory on it (e.g., Presburger arithmetics on natural numbers).
It consists of (1) expressive models allowing to cover a wide class of systems, and (2) a logic allowing to specify and to reason about the configurations of these models. 

The models we propose are called Constrained Petri Nets ($\syst$ for short). They are based on (place/transition) Petri nets where tokens are colored by data values.
Intuitively, tokens represent different occurrences of processes, and places are associated with control locations and contain
tokens corresponding to processes which are at a same control location. Since processes can manipulate local variables, each token (process occurrence) has several colors corresponding to the values of these variables. Then, configurations of our models are markings where each place contains a set of colored tokens, and transitions modify the markings as usual by removing tokens from some places and creating new ones in some other places.
Transitions are guarded by constraints on the colors of tokens before and after firing the transition.
We show that $\syst$s allow to model various aspects such as unbounded dynamic creation of processes, manipulation of local and global variables over unbounded domains such as integers, synchronization, communication through shared variables, locks, etc. 

The logic we propose for specifying  configurations of $\syst$s is called Colored Markings Logic ($\cml$ for short). It is a first order logic over tokens and their colors. It allows to reason about the presence of tokens in places, and also about the relations between the colors of these tokens. The logic $\cml$ is parameterized by a first order logic over the color domain allowing to express constraints on tokens.

We investigate the decidability of the satisfiability problem of $\cml$ and its applications in verification of $\syst$s. 
While the logic is decidable for finite color domains (such as booleans), we show that, unfortunately, the satisfiability problem of this logic becomes undecidable as soon as we consider the color domain to be the set of natural numbers with the usual ordering relation (and without any arithmetical operations). 
We prove that this undecidability result holds already for the fragment $\forall^*\exists^*$ of the logic (in the alternation hierarchy of the quantifiers over token variables) with this color domain. 

On the other hand, we prove that the satisfiability problem is decidable for the fragment  $\exists^*\forall^*$ of $\cml$ whenever the underlying color logic has a decidable satisfiability problem, e.g., Presburger arithmetics, the first-order logic of addition and multiplication over reals, etc.
Moreover, we prove that the fragment $\exists^*\forall^*$ of $\cml$ is effectively closed under 
$\post$ and $\pre$ image computations (i.e., computation of immediate successors and immediate predecessors) for $\syst$s where all transition guards are also in $\exists^*\forall^*$. We show also that the same closure results hold when we consider the fragment $\exists^*$ instead of $\exists^*\forall^*$.

These generic decidability and closure results can be applied in the verification of $\syst$ models following different approaches such as pre-post condition (Hoare triples based) reasoning, bounded reachability analysis, and inductive invariant checking.
More precisely, we derive from our results mentioned above that
(1) checking whether starting from a $\exists^*\forall^*$ pre-condition, a $\forall^*\exists^*$ condition holds after the execution of a transition is decidable, that (2) the
 bounded reachability problem between two $\exists^*\forall^*$ definable sets is decidable, and that (3) checking whether a formula defines an inductive invariant is decidable for Boolean combinations of $\exists^*$ formulas.

These results can be used to deal with non trivial examples of systems. Indeed, in many cases, program invariants and the assertions needed to establish them fall in the considered fragments of our logic.
We illustrate this by carrying out in our framework the verification of several parameterized systems (including  the examples usually considered in the literature such as the Bakery mutual exclusion protocol~\cite{Lamport-74}). In particular, we provide an inductive proof of correctness for the parametric version of the Reader-Writer lock system introduced in~\cite{Flanagan-Freund-Qadeer-02}. Flanagan et al. give a proof of this case study for the case of one reader and one writer. We consider here an arbitrarily large number of reader and writer processes and carry out (for the first time, to our knowledge) its verification by inductive invariant checking.
We provide experimental results obtained for these examples using a prototype tool we have implemented based on our decision and verification procedures.


\subsection*{Related work:} The use of unbounded Petri nets as models for parameterized networks of processes has been proposed in many existing works such as \cite{GS92,EN98,DRB02}. However, these works consider networks of {\em finite-state} processes and do not address the issue of manipulating infinite data domains. The extension of this idea to networks of infinite-state processes has been addressed only in very few works \cite{AJ98,Delzanno-01,BD02,AD06}.
In \cite{AJ98}, Abdulla and Jonsson consider the case of networks of 1-clock timed systems and show, using the theory of well-structured systems and well quasi orderings \cite{AJ,FS01}, that the verification problem for a class of safety properties is decidable. Their approach has been extended in \cite{Delzanno-01,BD02} to a particular class of multiset rewrite systems with constraints (see also \cite{AD06} for recent developments of this approach). 
Our modeling framework is actually inspired by these works.
However, while they address the issue of deciding the verification problem of safety properties (by reduction to the coverability problem)  for specific classes of systems, we consider in our work a general framework, allowing to deal in a generic way with various classes of systems, where the user can express assertions about the configurations of the system, and check automatically that they hold (using post-pre reasoning and inductive invariant checking) or that they do not hold (using bounded reachability analysis). Our framework allows to reason automatically about systems which are beyond the scope of the techniques proposed in  \cite{AJ98,Delzanno-01,BD02,AD06} such as, for instance, the parameterized Reader-Writer lock system presented in this paper.

In parallel to our work, Abdulla et al. developed in~\cite{ADHR-07,AHDR08} abstract 
backward reachability analysis for a restricted class of constrained multiset rewrite systems. Basically, they consider constraints which are boolean combinations of universally quantified formulas, where data constraints are in the particular class of existentially quantified gap-order constraints. 
The abstraction they consider consists in taking after each pre-image computation the upward closure of the obtained set. This helps termination of the iterative computation and yields an upper-approximation of the backward reachability set.
However, the used abstract analysis can be too imprecise for some systems. 
Our approach allows in contrast to carry out pre-post reasoning, invariance checking, as well as bounded analysis, for a larger class of systems. 
Techniques like those used in~\cite{ADHR-07,AHDR08} could be integrated into our framework in the future in order to discover (local) invariants automatically.

In a series of papers, Pnueli et al. developed an approach for the verification of parameterized systems combining abstraction and proof techniques (see, e.g., \cite{Arons:ParamVerWAutCompIndInv:01}). This is probably one of the most advanced existing approaches allowing to deal with unbounded networks of infinite-state processes. We propose here a different framework for reasoning about these systems. 
In \cite{Arons:ParamVerWAutCompIndInv:01}, the authors consider a logic on (parametric-bound) {\em arrays} of integers, and they identify a fragment of this logic for which the satisfiability problem is decidable. In this fragment, they restrict the shape of the formula (quantification over indices) to formulas in the fragment $\exists^* \forall^*$ similarly to what we do, and also the class of used arithmetical constraints on indices and on the associated values. In a recent work by Bradley et al.~\cite{Bradley}, the satisfiability problem of the logic of unbounded arrays with any kind of elements values is investigated and the authors provide a new decidable fragment, which is incomparable to the one defined in \cite{Arons:ParamVerWAutCompIndInv:01}, but again which imposes similar restrictions on the quantifiers alternation in the formulas, and on the kind of constraints on indices that can be used. 
In contrast with these works, we consider a logic on \emph{multisets} of elements with any kind of associated data values, provided that the used theory on the data domain is decidable. For instance, we can use in our logic general Presburger constraints whereas \cite{Arons:ParamVerWAutCompIndInv:01} allows limited classes of constraints. On the other hand, we cannot specify faithfully unbounded arrays in our decidable fragment because formulas of the form $\forall^* \exists^*$ are needed to express that every non extremal element has a successor/predecessor. Nevertheless, for the verification of safety properties and invariant checking, expressing this fact is not necessary, and therefore, it is possible to handle (model and verify) in our framework all usual examples of parameterized systems (such as mutual exclusion protocols) considered in the works cited above.

Let us finally mention that there are recent works on logics (first-order logics, or temporal logics) over finite/infinite structures (words or trees) over infinite alphabets (which can be considered as abstract infinite data domains) \cite{AncaLICS06,AncaPODS06,DemriLICS06}. The obtained positive results so far concern logics with very limited data domain (basically infinite sets with only equality, or sometimes with an ordering relation), and are based on reduction to complex problems such as reachability in Petri nets.


 
\section{Colored Markings Logic}
\label{sect-cml}

\subsection{Preliminaries}

Consider an enumerable set of \emph{tokens} and let us identify this set with the set of natural numbers $\mathbb{N}$. Intuitively, tokens represent occurrences of (parallel) processes. We assume that tokens may have colors corresponding for instance to data values attached to the corresponding processes. We consider that each token has $N$ colors, for some fixed natural number $N > 0$. Let $\mathbb{C}$ be a (potentially infinite) \emph{token color domain}. Examples of color domains are the set of natural numbers $\mathbb{N}$ and the set of real numbers $\mathbb{R}$. 
Also, we consider  that tokens can be located at \emph{places}. Let $\mathbb{P}$ be a finite set of such places. Intuitively, places represent control locations of processes.
A $N$-dim \emph{colored marking} is a mapping  $M \in [\mathbb{N} \flc (\mathbb{P} \cup \{\bot\})\times \mathbb{C}^N]$ which associates with each token its place (if it is defined, or $\bot$ otherwise) and the values of its colors.



Let $M$ be a $N$-dim colored marking, let $t \in \mathbb{N}$ be a
token, and let $M(t) = (p,c_1,\ldots,c_N)$ $\in (\mathbb{P} \cup \{\bot\})\times \mathbb{C}^N$. Then, we consider that $\mathit{place}_M(t)$ denotes the element $p$, that $\mathit{color}_M(t)$ denotes the vector $(c_1,\ldots,c_N)$, and that for every $k\in\{1,\ldots,N\}$, $\mathit{color}_{M,k}(t)$ denotes the element $c_k$. We omit the subscript $M$ when it is clear from the context.


\subsection{Colored Markings Logic (\texorpdfstring{$\cml$)}{CML}}

The logic $\cml$ is parameterized by a (first-order) logic on the considered token color domain $\mathbb{C}$, $\fo(\mathbb{C},\Omega,\Xi)$, i.e., by the set of operations $\Omega$ and the set of basic predicates (relations) $\Xi$ allowed on $\mathbb{C}$.
In the sequel, we omit all or some of the parameters of $\cml$ when their specification is not necessary.

\smallskip

Let $T$ be a set of \emph{token variables} ranging over $\mathbb{N}$ (set of tokens) and let $C$ be a set of \emph{color variables} ranging over $\mathbb{C}$, and assume that $T \cap C = \emptyset$.
Then, the set of terms of $\cml(\mathbb{C}^N,\Omega,\Xi)$ (called \emph{token color terms}) is given by  the grammar:
\[ t ::= z \; | \; \delta_k(x)  \mid  o(t_1, \ldots, t_n) \]
where $z \in C$, $k\in\{1,\ldots,N\}$, $x \in T$, and $o \in \Omega$. Intuitively, the term $\delta_k(x)$ represents the $k$th color (data value) attached to the token associated with the token variable $x$.
We denote by $\equiv$ the syntactic equality relation on terms.

\smallskip

The set of \emph{formulas} of $\cml(\mathbb{C}^N,\Omega,\Xi)$  is given by:
\[\varphi ::= \mathit{true} \; | \; x=y\; | \; p(x) \; | \; r(t_1, \ldots, t_m) \; | \; \neg \varphi \; | \; \varphi \vee \varphi \; | \; \exists z \sep \varphi \; | \; \exists x \sep \varphi \]
where $x,y \in T$, $z \in C$, $p \in \mathbb{P} \cup \{ \bot \}$, $r \in \Xi$.
As usual, $\mathit{false}$ and the boolean connectives such as conjunction ($\wedge$) and implication ($\imp$), and universal quantification $(\forall)$ can be defined in terms of $\mathit{true}$, $\neg$, $\vee$, and $\exists$. We also use $\exists x \in p \sep \varphi$ (resp. $\forall x \in p \sep \varphi$) as an abbreviation of the formula $\exists x \sep p(x) \wedge \varphi$ (resp. $\forall x \sep p(x) \imp \varphi$).

The notions of free/bound occurrences of variables in formulas and the notions of closed/open formulas are defined as usual in first-order logics. Given a formula $\varphi$, the set of free variables in $\varphi$ is denoted $\mathit{FV}(\varphi)$.
In the sequel, we assume w.l.o.g. that in every formula, each variable is quantified at most once.

\medskip

We define a satisfaction relation between colored markings and $\cml$ formulas. For that, we need first to define the semantics of $\cml$ terms. Given valuations $\theta \in [T \flc \mathbb{N}]$, $\nu \in [C \flc \mathbb{C}]$, and a colored marking $M$,
we define a mapping $\semdl{\cdot}_{M,\theta,\nu}$ which associates with each color term a value in $\mathbb{C}$:
\begin{eqnarray*}
\semdl{z}_{M,\theta,\nu} & =  & \nu(z) \\
\semdl{\delta_k(x)}_{M,\theta,\nu} & = & \mathit{color}_{M,k}(\theta(x)) \\
\semdl{o(t_1, \ldots, t_n)}_{M,\theta,\nu} & = & o (\semdl{t_1}_{M,\theta,\nu}, \ldots, \semdl{t_n}_{M,\theta,\nu})
\end{eqnarray*}

Then, 
we define inductively the satisfaction relation $\models_{\theta,\nu}$ between colored markings $M $ and $\cml$ formulas as follows:
\begin{eqnarray*}
M \models_{\theta,\nu} \mathit{true} & \; \; & \mbox{always} \\
M \models_{\theta,\nu} x=y & \; \mbox{iff} \; & \theta(x) = \theta(y) \\
M \models_{\theta,\nu} p(x) & \; \mbox{iff} \; &  \mathit{place}_M(\theta(x)) = p \\
M \models_{\theta,\nu} r(t_1, \ldots, t_m) & \; \mbox{iff} \;& r(\semdl{t_1}_{M,\theta,\nu}, \ldots, \semdl{t_m}_{M,\theta,\nu}) \\
M \models_{\theta,\nu} \neg \varphi & \; \mbox{iff} \; & M \not\models_{\theta,\nu} \varphi \\
M \models_{\theta,\nu} \varphi_1 \vee \varphi_2 & \; \mbox{iff} \; & 
M \models_{\theta,\nu} \varphi_1 \; \mbox{or} \; M \models_{\theta,\nu} \varphi_2 \\
%
M \models_{\theta,\nu} \exists x \sep \varphi & \; \mbox{iff} \; & 
\exists t \in \mathbb{N} \sep M \models_{\theta[x \leftarrow t],\nu} \varphi \\
M \models_{\theta,\nu} \exists z \sep \varphi & \; \mbox{iff} \; & 
\exists c \in \mathbb{C} \sep M \models_{\theta,\nu[z \leftarrow c]} \varphi 
\end{eqnarray*}

For every formula $\varphi$, we define 
$\semcro{\varphi}_{\theta,\nu}$ to be the set of colored markings $M$ such that
$M \models_{\theta,\nu} \varphi$. 
A formula $\varphi$  is {\em satisfiable} iff there exist valuations
$\theta$ and $\nu$ s.t. $\semcro{\varphi}_{\theta,\nu} \neq \emptyset$.
The subscripts of $\models$ and $\semcro{\cdot}$ are omitted in the case of a closed formula.

\subsection{Syntactical forms and fragments}

\subsubsection{Prenex normal form:}
A formula is in {\em prenex normal  form} (PNF) if it is of the form
$$ Q_1y_1Q_2 y_2 \ldots Q_m y_m \sep \varphi $$
where (1) $Q_1, \ldots, Q_m$ are (existential or universal) quantifiers, 
(2) $y_1, \ldots, y_m$ are variables in $T \cup C$, and $\varphi$ is a quantifier-free formula.
%
It can be proved that 
for every formula $\varphi$ in $\cml$, there exists an equivalent formula $\varphi'$ in prenex normal form.

\subsubsection{Quantifier alternation hierarchy:}
We consider two families $\{\Sigma_n\}_{n\geq 0}$ and
$\{\Pi_n\}_{n\geq 0}$ of fragments of $\cml$ defined according to the alternation depth of existential and universal quantifiers in their PNF: 
\begin{enumerate}[$\bullet$]
\item Let $\Sigma_0 = \Pi_0$ be the set of formulas in PNF 
where all quantified variables are in $C$,
\item For $n \geq 0$, let $\Sigma_{n+1}$ (resp. $\Pi_{n+1}$) be the set of formulas $ Q y_1\ldots y_m \sep \varphi $ in PNF where $y_1, \ldots, y_m \in T \cup C$, $Q$ is the existential (resp. universal) quantifier $\exists$ (resp. $\forall$), and $\varphi$ is a formula in $\Pi_n$ (resp. $\Sigma_n$).
\end{enumerate}
It is easy to see that, for every $n \geq 0$,   $\Sigma_n$ and $\Pi_n$ are closed under conjunction and disjunction, and that the negation of a $\Sigma_n$ formula is a $\Pi_n$ formula and vice versa.
For every $n \geq 0$, let $B(\Sigma_n)$ denote  the set of all boolean combinations of $\Sigma_n$ formulas. Clearly, $B(\Sigma_n)$ subsumes both $\Sigma_n$ and $\Pi_n$, and is included in both  $\Sigma_{n+1}$ and $\Pi_{n+1}$.

\subsubsection{Special form:} 
\label{sect_special_form} 
The set of formulas in special form is given by the grammar:
\[ \varphi ::=  \mathit{true} \; | \; x = y \; | \; r(t_1,\ldots,t_n) \; | \; \neg \varphi \; | \; \varphi \vee \varphi \; | \;   \exists z \sep \varphi \; | \; \exists x \in p \sep \varphi  \]
where $x,y \in T$, $z \in C$, $p \in \mathbb{P} \cup \{ \bot \}$, $r \in \Xi$, and $t_1,\ldots, t_n$ are token color terms.
So, formulas in special form do not contain atoms of the form $p(x)$.

It is not difficult to see that 
for every closed formula $\varphi$  in $\cml$, there exists an equivalent formula $\varphi'$ in special form.
The transformation is based on the following fact: since variables are assumed to be quantified at most once in formulas, each formula $\exists x \sep \phi$ can be replaced by $\bigvee_{p \in \mathbb{P} \cup \{ \bot \} } \exists x \in p \sep \phi_{x,p}$ where $\phi_{x,p}$ is obtained by substituting in $\phi$ each occurrence of $p(x)$ by $\mathit{true}$, and each occurrence of $q(x)$, with $p \neq q$, by $\mathit{false}$.

\subsubsection{Examples of properties expressible in $\texorpdfstring{\cml$}{CML}:}
The fact that ``the place $p$ is empty'' is expressed by the $\Pi_1$ formula $\forall x\sep \neg p(x)$.
The fact that  ``$p$ contains precisely one token'' is expressed by the $B(\Sigma_1)$ formula:
$ (\exists x\in p\sep true) \land (\forall y,z\in p\sep y=z) $.
The $\Pi_1$ formula $ \forall x,y\in p\sep x=y $ expresses the fact that $p$ has one or zero token.

The properties above do not depend on the colors of the token. The following examples show that the number of tokens in a place is also determined by properties of colors attached to tokens. Let consider now the logic $\cml(\mathbb{N}, \{0\}, \{\leq\})$. 
Then, the  fact that ``$p$ contains an infinite number of tokens'' is implied by the $\Pi_2$ formula:
$$\forall x\in p\sep \exists y\in p\sep \delta_1(x) < \delta_1(y) $$
Conversely, the fact that ``$p$ has a finite number of tokens'' is implied by the $\Sigma_2$ formula:
$$
\exists x,y\in p\sep\forall z,u\in p\sep \delta_1(x)\le \delta_1(z)\le \delta_1(y) \land (\delta_1(z)=\delta_1(u) \limp z=u)
$$



\section{Satisfiability Problem: Undecidability} 
\label{sect-unsat}





We show hereafter that the satisfiability problem of the logic $\cml$ is undecidable as soon as we consider formulas in $\Pi_2$, and this holds even for  simple theories on colors.
%
\begin{theorem}
\label{thm-sat-undec}
The satisfiability problem of the fragment $\Pi_2$ of $\cml(\mathbb{N}^2,\{0\},\{\leq\})$ is undecidable.
\end{theorem}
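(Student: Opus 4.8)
The plan is to reduce from the halting problem of deterministic two-counter (Minsky) machines: from each such machine $M$ I would build a formula $\varphi_M\in\Pi_2$ of $\cml(\mathbb{N}^2,\{0\},\{\leq\})$ that is satisfiable if and only if $M$ does \emph{not} halt from the zero configuration. Since non-halting is undecidable (being the complement of the halting problem), this yields undecidability of satisfiability for $\Pi_2$; any of the usual undecidable machine problems would do, but two-counter machines are the most convenient here. The idea is to let a (possibly infinite) colored marking encode a run $C_0,C_1,C_2,\dots$ of $M$: I use one place $\mathsf{ctrl}_\ell$ per control label $\ell$ (tokens in these places are called \emph{configuration tokens}), places $C_1,C_2$ holding ``counter-unit'' tokens, and auxiliary places $Z_1,Z_2$ for ``zero flags''. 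A configuration token in $\mathsf{ctrl}_\ell$ with colors $(t,t')$ represents ``at time $t$ the control is $\ell$ and the next time stamp is $t'$''; a token in $C_i$ with colors $(b,d)$ represents a unit of counter $i$ created at time $b$ and destroyed at time $d$, with the convention that $d\leq b$ means ``never destroyed'', so that the value of counter $i$ at time $t$ is the number of $C_i$-tokens \emph{alive at $t$}, i.e.\ those with $b\leq t$ and ($d\leq b$ or $t<d$). Time stamps need not be $0,1,2,\dots$: a $\Pi_1$ conjunct only requires that distinct configuration tokens carry distinct first colors, that the unique configuration token in $\mathsf{ctrl}_{\ell_0}$ has the least first color among all configuration tokens, and that both counters are empty at that time; the run is then recovered by following the ``next'' pointers given by the second colors.

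First I would write $\varphi_M$ as a conjunction of: (i) the initialization constraint just described (a conjunction of a $\Sigma_1$ formula and a $\Pi_1$ formula); (ii) ``the halting label never occurs'', i.e.\ the $\Pi_1$ formula $\forall x.\,\neg\mathsf{ctrl}_{\ell_{\mathrm{halt}}}(x)$; and (iii) for each instruction-bearing label $\ell$, a faithfulness constraint of the shape ``$\forall x$, if $x\in\mathsf{ctrl}_\ell$ then there is a configuration token $y$ at time $\delta_2(x)$, placed as $\ell$ dictates, whose counter values at $\delta_1(y)$ are obtained from those at $\delta_1(x)$ as $\ell$ prescribes''. The crux is keeping every conjunct in $\forall^*\exists^*$ form. ``Counter $i$ is incremented (resp.\ decremented) by exactly one as time goes from $\delta_1(x)$ to $\delta_2(x)$'' is written as the conjunction of a single existential statement under the leading $\forall x$ --- ``at least one $C_i$-unit is born into, resp.\ dies out of, the interval $(\delta_1(x),\delta_2(x)]$'', a $\Pi_2$ formula --- together with purely universal ``at most one such unit'' and ``no $C_i$-unit dies out of / is born into that interval'' statements, which are $\Pi_1$; this deliberately avoids ever asserting ``this token is new'', which would cost a third quantifier alternation. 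Zero-tests use the places $Z_i$: one $\Pi_1$ conjunct states ``a $Z_i$-flag at time $t$ implies no $C_i$-unit is alive at $t$'', and the implication ``no $Z_i$-flag at $t\ \Rightarrow\ $some $C_i$-unit is alive at $t$'' is equivalent to an existential formula, so asserting it at every configuration token stays in $\Pi_2$; together these pin the $Z_i$-flags to exactly the times when counter $i$ is zero, and the two branches of a zero-test are then selected by the presence or absence of a flag token rather than by a bare universal quantifier. Since all conjuncts lie in $\Sigma_1$, $\Pi_1$ or $\Pi_2$, all of which are contained in $\Pi_2$, and $\Pi_2$ is closed under conjunction, we get $\varphi_M\in\Pi_2$.

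For correctness, if $M$ does not halt then the ``honest'' marking encoding its unique infinite run --- time stamps $0,1,2,\dots$, counter units managed by, say, a FIFO discipline, and a $Z_i$-flag placed exactly at the times counter $i$ is zero --- satisfies every conjunct, so $\varphi_M$ is satisfiable. Conversely, from any model of $\varphi_M$ one follows the chain of configuration tokens starting at the least-time one and proves, by induction on $n$ and using determinism of $M$ together with the fact that the faithfulness constraints fix the successor's label and both counter values \emph{exactly}, that the $n$-th token on the chain encodes $C_n$; hence, if $M$ halted, the chain would eventually reach a $\mathsf{ctrl}_{\ell_{\mathrm{halt}}}$ token, contradicting conjunct~(ii). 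The step I expect to be the main obstacle is precisely this confinement of every constraint to $\forall^*\exists^*$: since the color theory offers only $\leq$, with neither successor nor arithmetic, one cannot speak of ``the next time stamp'' or ``the immediately following cell'' directly, and the naive encodings of both exact counter updates and zero-tests introduce a $\forall\exists\forall$ alternation; the two devices above --- representing counter values by birth/death-timed populations of tokens, and reifying ``counter $i$ is zero'' as extra tokens whose defining implication collapses to an existential --- are where I expect the real work, along with checking that nonstandard models, whose time stamps may have order type beyond $\omega$, still decode to a genuine run of $M$ (which they do, because the induction only uses the initial $\omega$-segment of that order).
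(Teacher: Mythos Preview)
Your approach is correct but takes a genuinely different route from the paper's. The paper reduces from the \emph{halting} problem of Turing machines: each token encodes one tape cell at one computation step via the two colors $(\mathit{step},\mathit{cell})$, places record the triple (letter, head-presence, control state), and the essential $\Pi_2$ content is a single ``grid'' axiom $\forall i,j.\,\exists x.\,\mathit{cell}(x)=i\wedge\mathit{step}(x)=j$ asserting that every cell of every step is represented; the transition relation, initial configuration and acceptance are then $B(\Sigma_1)$, so the conjunction is $\Pi_2$ and satisfiable iff the machine accepts. You instead reduce from \emph{non-halting} of two-counter machines, representing counter values as populations of birth/death-timed tokens and reifying zero-tests via flag places; here the $\forall\exists$ alternation is distributed over the successor-existence, at-least-one-born/died, and flag-completeness conjuncts, and the formula is satisfiable iff the machine runs forever. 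Both reductions establish undecidability (one shows $\Sigma^0_1$-hardness of satisfiability, the other $\Pi^0_1$-hardness). The paper's encoding is more direct---the alternation arises from one natural totality statement and everything else is essentially universal---whereas yours trades a simpler machine model for a more delicate encoding; the devices you single out (birth/death intervals to get exact increments without a third alternation, and flag tokens to make the nonzero branch of a zero-test collapse to an existential) are precisely the extra work the paper sidesteps by having an entire tape to quantify over.
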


\begin{proof}
The proof is done by reduction of the halting problem of Turing
machines. The idea is to encode a computation of a machine, seen as a
sequence of tape configurations, using tokens with integer
colors. Each token represents a cell in the tape of the machine at
some computation step. Therefore, the token has two integer colors:
its position in the tape, and 
the position of its configuration in the computation (the computation step). 
The place of a token identifies uniquely 
the letter stored in the associated cell, 
the control state of the machine in the computation step of the cell, and
the position of the head.
Then,  it is possible to express using formulas in 
$\Pi_2$ that two consecutive
configurations correspond indeed to a valid transition of the
machine. Intuitively, this is possible because $\Pi_2$ formulas allow to
relate each cell at some configuration to the corresponding cell at
the next configuration. 

\medskip
Let us fix the notations used for Turing machine.
A Turing machine is defined by $M=(Q,\Gamma,B,q_0,q_f,\Delta)$ where 
$Q$ is its finite set of states, 
$\Gamma$ is the finite tape alphabet containing the default blank symbol $B$, 
$q_0,q_f \in Q$ are the initial resp. the final state, and 
$\Delta$, called the transition relation, is a subset of
$Q\times \Gamma \times Q \times \Gamma \times \{L,R\}$.

A configuration of the machine is given by a triplet $(q,\mathcal{T},i)$ where
$q\in Q$, 
$\mathcal{T} \in [\mathbb{N} \mapsto \Gamma]$ is the tape of cells identified by their position $j\in\mathbb{N}$ and storing a letter $\mathcal{T}(j)\in\Gamma$,
and $i$ is the position of the head on the tape.

A transition $(q,X,q',Y,d)\in\Delta$ defines a relation between two configurations $(q,\mathcal{T},i)$ and $(q',\mathcal{T}',i')$ 
iff 
either $i'=i+1$ and $d=R$ or $i'=i-1$ and $d=L$,  
the machine reads $X$ at position $i$, i.e. $\mathcal{T}(i)=X$, and 
writes $Y$ at the same position, i.e. $\mathcal{T}'(i)=Y$, and
in any other position $k$ different from $i$, the tapes $\mathcal{T}$ and $\mathcal{T}'$ are equal, i.e. $\forall k\sep k\neq i \implies
\mathcal{T}(k)=\mathcal{T}'(k)$.
The initial configuration of the machine is $(q_0,\mathcal{T}_0,0)$ where
$\mathcal{T}_0$ is the tape with all cells containing the blank symbol $B$.

Without loss of generality, we suppose that 
(a) the machine has no deadlocks, 
(b) the head never goes left when it is at position $0$,
and (c) when the final state is reached the machine loops in
this state.

\medskip

We proceed now to the encoding of a computation that reaches the final
state using a $\Pi_2$ formula of $\cml(\mathbb{N}^2,\{0\},\{\leq\})$.

Instead of generic names $\delta_1$ and $\delta_2$ for color functions we use
more intuitive names $step$ and $cell$ respectively. A token $x$ with
$step(x)=j$ and $cell(x)=i$ represents the $i^{th}$ cell of the $j^{th}$
configuration in a computation.

We define the set of places $\mathbb{P} = \Gamma \times
\{\mathit{Head},\mathit{Nohead}\} \times Q$ and, for convenience, we denote
members of $\mathbb{P}$ by strings, e.g., $\mathit{A\_Head\_q}$ with $A\in\Gamma$ and $q\in Q$. 
A token $x$ in a place named $\mathit{A\_Head\_q}$
encodes a cell labeled by the letter $A$ in a configuration where the
head is at the position $cell(x)$ and the current state is
$q$. Since in a given configuration the head and the control state have a unique occurence, our encoding includes the property that, among all tokens
that have the same $step$ color, there is only one token in a place containing $Head$ in its name.

First, we encode the properties of tapes. 
For this, we introduce the shorthand notation $\mathtt{Head}(x)$, 
 parametrized by a token variable $x$, expressing that the token represented by $x$ encodes a cell that carries the head, i.e, the name of its place has $\mathit{Head}$ as substring.
$$\mathtt{Head}(x) = \bigvee_{q\in Q}\bigvee_{A\in\Gamma}\mathit{A\_Head\_q}(x)$$ 
The following $\Pi_2$ formula $\mathtt{Tapes}$ expresses that,
for any tape $j$ in an infinite computation,
any cell $i$ is represented by a unique token $x$ (conditions (3.1) and (3.2)), and 
there is exactly one token $z$ which represents the position of the head (conditions (3.3) and (3.4)).
\begin{eqnarray}
 \mathtt{Tapes}   
    & = &  \forall i,j\sep \exists x\sep cell(x) =i \land step(x)=j\\
    & \land &  \forall x,y \sep (step(x)=step(y)\land cell(x)=cell(y))\implies x=y\\
    & \land &  \forall j\sep \exists x\sep step(x)=j \land \mathtt{Head}(x)\\
    & \land &  \forall x,y\sep (\mathtt{Head}(x) \land \mathtt{Head}(y)) \implies (step(x)\neq step(y))
\end{eqnarray}
Second, we encode the initial configuration using the following $B(\Sigma_1)$ formula:
\begin{eqnarray*}
 \mathtt{Init}
    & = & \forall x\sep (step(x)=0 \land cell(x)>0) \implies \mathit{B\_NotHead\_q}_0(x)\\
         & \land & \exists x\sep step(x)=0 \land cell(x)=0 \land \mathit{B\_Head\_q}_0(x)
\end{eqnarray*}
%
Third, we encode the termination condition saying that, at some step, the computation reaches the final state:
$$ 
\mathtt{Acceptance} = \exists x\sep \bigvee_{A\in \Gamma} \mathit{A\_Head\_q}_f(x)
$$

Finally, we encode each transition, i.e., the condition defining when two successive configurations correspond to a valid
transition in the machine.
For this, we have to fix 
the token storing the head in the current configuration ($x$),
the tokens at the left ($x_l$) and at the right ($x_r$) of the head in the current configuration, and
the tokens in the next configuration having the same position than $x$, $x_l$, and $x_r$ ($x'$, $x_l'$, resp. $x_r'$).
When this identification is done (see the left part of the implication), we have to decompose the global transition over all transitions $\delta\in\Delta$:
$$
    \mathtt{Trans}  =  \begin{array}[t]{l}
    			 \forall x,x_l,x_r \sep \forall x',x'_l,x'_r  \sep \\
              \phantom{\forall x,x_l,x_r}
              \left(\begin{array}{ll}
                 & \mathtt{Head}(x)   \\
                 \land & step(x)=step(x_l) \land step(x)=step(x_r) \\
                 \land & \lnot (\exists y\sep cell(x_l) < cell(y) < cell(x))  \\
                 \land & \lnot (\exists y\sep cell(x) < cell(y) < cell(x_r)) \\
                 \land & \lnot (\exists y\sep step(x) < step(y) < step (x')) \\
                 \land & step(x')=step(x'_l) \land step(x')=step(x'_r) \\
                 \land & cell(x)=cell(x')\land cell(x_l)=cell(x'_l) \land cell(x_r)=cell(x'_r)\\
              \end{array}\right) \vspace{2mm} \\
              \phantom{\forall x,x_l,x_r\forall x'} \implies \bigvee_{\delta\in \Delta} \mathtt{Trans}_\delta(x,x_l,x_r,x',x'_l,x'_r)
              \end{array}
$$
where 
$\mathtt{Trans}_\delta$ relates its parameters accordingly to transition $\delta$.
For example, if the transition $\delta$ is of the form $(q,X,q',Y,L)$ (the case of head moving at right is symmetrical),
then we obtain the following $\Pi_1$ formula:
$$
\mathtt{Trans}_\delta(x,x_l,x_r,x',x'_l,x'_r) = 
    \begin{array}[t]{ll}
      & \mathit{X\_Head\_q}(x) \land \mathit{Y\_NotHead\_q}'(x') \\
      \land & \bigwedge_{A\in \Gamma}(\mathit{A\_NotHead\_q}(x_l) \limp \mathit{A\_Head\_q}'(x'_l))  \\ 
      \land & \forall y,y'\sep
      			\left(\begin{array}{ll}
                 	   & y\neq x \land y \neq x_l\\
                  \land & y'\neq x' \land y' \neq x'_l \\
                  \land & step(y)=step(x)\\
                  \land & step(y')=step(x')\\
                  \land & cell(y)=cell(y')
                  \end{array}\right)\implies \mathtt{Same}(y,y')\\
    \end{array}
$$
where the shorthand notation $\mathtt{Same}(y,y')$ stands for 
$$
\bigwedge_{A\in\Gamma,p\in Q} \mathit{A\_NotHead\_p}(y) \Leftrightarrow \mathit{A\_NotHead\_p}(y')
$$ 
and
expresses that the two tokens $y$ and $y'$ carry the same letter.
Then, the $\mathit{Trans}$ formula is in $B(\Sigma_1)$.

The conjunction $\mathtt{Tapes} \land \mathtt{Init} \land \mathtt{Trans} \land \mathtt{Acceptance}$ is a $\Pi_2$ formula which is 
satisfiable iff there is an accepting run. This reduction shows the
undecidability of satisfiability for $\Pi_2$ fragment of $\cml(\mathbb{N}^2,\{0\},\{\le\})$.
\end{proof}

\section{Satisfiability problem: A Generic Decidability Result}
\label{sect-sat}

We prove in this section that the satisfiability problem for formulas in the fragment $\Sigma_2$ of $\cml$ is decidable whenever this problem is decidable for the underlying color logic. 

\begin{theorem}
\label{thm-sat-dec}
The satisfiability problem of the fragment $\Sigma_2$ of $\cml(\mathbb{C}^N,\Omega,\Xi)$, for any $N\ge 1$, is decidable provided that the satisfiability problem of $\fo(\mathbb{C},\Omega,\Xi)$ is decidable.
\end{theorem}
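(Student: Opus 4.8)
The plan is to reduce satisfiability of a $\Sigma_2$ formula over $\cml(\mathbb{C}^N,\Omega,\Xi)$ to satisfiability of a formula in the pure color logic $\fo(\mathbb{C},\Omega,\Xi)$. First I would put the input formula in special form (Section~\ref{sect_special_form}) and then in prenex normal form, so that it reads $\exists x_1 \in p_1 \ldots \exists x_k \in p_k \, \exists \vec{z} \, \forall y_1 \ldots \forall y_\ell \, \forall \vec{u} \, . \, \varphi$ with $\varphi$ quantifier-free and containing no atoms of the form $p(x)$ — only equalities between token variables and color predicates $r(t_1,\ldots,t_m)$ over the terms $\delta_j(\cdot)$ and color variables. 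The existential token variables $x_1,\ldots,x_k$ name a finite set of "witness" tokens, each pinned to a known place $p_i$; the universal token block ranges over \emph{all} tokens of the marking.

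The key observation driving the decidability is that, once the $k$ existential tokens and their places are fixed, the truth of the $\forall$-part depends on the ambient marking only through (i) which places are nonempty and how the universally quantified tokens can be colored in each place, and (ii) the colors of the $k$ witnesses. Concretely, I would guess: a partition/identification pattern among $x_1,\ldots,x_k$ (which of them are equal as tokens, hence must lie in the same place and carry the same color vector); and, for each place $p\in\mathbb{P}\cup\{\bot\}$, a boolean flag saying whether $p$ is allowed to contain "further" tokens. For each universally quantified token variable $y_j$ in the body, every occurrence ranges over some token that is either one of the witnesses or a fresh token in some place. Since $\varphi$ is quantifier-free and the only place-dependence has been eliminated, instantiating $y_j$ against a fresh token in place $p$ contributes, for each such $p$, a single fresh color-vector variable that must satisfy the color constraints \emph{universally}; instantiating $y_j$ against witness $x_i$ substitutes $\delta_r(x_i)$ directly. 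Thus the whole $\cml$ formula collapses to a color-logic sentence of the shape $\exists (\text{witness colors})\, \forall (\text{one fresh color-vector per place per universal variable})\, . \, \psi$, i.e. a first-order sentence over $(\mathbb{C},\Omega,\Xi)$ (after the standard encoding of an $N$-vector of colors by $N$ separate variables). Each such guessed sentence is decidable by hypothesis, and the original formula is satisfiable iff some guess yields a satisfiable sentence; there are only finitely many guesses, so the whole procedure is effective.

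The step I expect to be the main obstacle is making rigorous the claim that it suffices to consider \emph{one} fresh token per place when instantiating a universally quantified token variable, and dually that a model of the reduced color sentence can be blown up into a genuine colored marking. The subtlety is that a single $\forall y$ may appear in several atoms simultaneously (e.g. $r(\delta_1(y),\delta_1(x_1))$ and $r(\delta_2(y),\delta_2(y))$), so "$y$ ranges over a fresh token in place $p$" must be handled as one color-vector per $(p,y)$ consistently across all its occurrences — this is fine because a single universal quantifier binds a single value. The genuinely delicate direction is soundness of satisfiability transfer: given a model of the color sentence, I must construct a marking realizing it, which means populating each place flagged nonempty with tokens whose color vectors cover exactly the color values that witness the $\forall$ part, while ensuring the witnesses $x_i$ keep their assigned colors and places; one also has to be careful that, since the color domain and the number of tokens are unbounded, we are free to add as many tokens as needed. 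Here the absence of $p(x)$ atoms and of nested quantifiers in the body, guaranteed by the special/prenex forms, is exactly what keeps the back-and-forth finite and decidable; I would isolate this as the core lemma and prove it by a direct model construction, then assemble the theorem from it together with the finite enumeration of guesses.
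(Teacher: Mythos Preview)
Your reduction can be made to work, but it is considerably more elaborate than necessary, and the step you single out as the ``main obstacle'' is a symptom of having missed the central simplification. The key fact is a \emph{small model property}: if $\exists\vec{x}\,\exists\vec{z}\,\forall\vec{y}.\,\phi$ is satisfiable, it is already satisfied by the sub-marking consisting solely of the witnesses for $\vec{x}$, because the $\forall\vec{y}$ block is preserved under passing to sub-markings. Once you have this, the universal token quantifiers collapse to a finite conjunction over all maps $\vec{y}\to\vec{x}$, and you land in $\Sigma_1$ with no residual universal color quantifiers and no per-place ``further tokens'' flags at all. From $\Sigma_1$ one then reaches a purely existential color formula by guessing the equalities among the $x_i$ and their places and replacing each $\delta_k(x_i)$ by a fresh color variable. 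This is the paper's route.

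Your flagged scheme still gives a correct decision procedure, but only because one of your guesses, namely all flags off, \emph{is} the small-model reduction, and every other guess produces a strictly stronger color sentence (the added conjuncts $\forall c.\,\ldots$ can only constrain further). Hence the disjunction over guesses is equisatisfiable with the all-flags-off case alone. You do not appear to see this; instead you worry about building a marking for the flagged guesses, and that construction is in fact broken as you describe it. For instance, with $\exists x\in q.\,\forall y_1,y_2\in p.\, y_1\neq y_2$, your flagged reduction (each $y_j$ its own fresh token, so $y_1\neq y_2\mapsto\mathit{true}$) is trivially satisfiable, yet no marking with $p$ nonempty satisfies the original, since $y_1=y_2=t$ already fails. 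Your procedure still returns ``satisfiable'' correctly, but only because the unflagged guess carries the day; the model construction you propose for flagged guesses does not. The direction that actually needs an argument is completeness (original satisfiable $\Rightarrow$ some guess satisfiable), and that is exactly the small model property. Make that the core lemma and drop the flags.
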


\begin{proof}
The idea of the proof is to reduce the satisfiability problem of
$\Sigma_2$ formulas to the satisfiability problem of $\Sigma_0$
formulas. We proceed as follows: we
prove first that the fragment $\Sigma_2$ has the small model property,
i.e., every satisfiable formula $\varphi$ in $\Sigma_2$ has a model of
a bounded size (where the size is the number of tokens in each place).
This bound corresponds actually to the number of existentially
quantified token variables in the formula. Notice that this fact does
not lead directly to an enumerative decision procedure for the
satisfiability problem since the number of models of a bounded size is
infinite in general (due to infinite color domains).  Then we use the
fact that over a finite model, the universal quantifications in $\varphi$
can be transformed into finite conjunctions in order to build a
formula $\widehat{\varphi}$ in $\Sigma_1$ which is satisfiable if and
only if the original formula $\varphi$ is satisfiable. Actually,
$\widehat{\varphi}$ defines precisely the upward-closure of the set of
markings defined by $\varphi$ (w.r.t. the inclusion ordering between
sets of colored markings, extended to vectors of places). Finally we
show that the $\Sigma_1$ formula $\widehat{\varphi}$ is satisfiable if
and only if the $\Sigma_0$ formula obtained by transforming
existential quantification over tokens into existential quantification
over colors is decidable.

\medskip

We define the size of a marking $M$ to be the number of tokens $x$ for which $place_M(x)\neq \bot$.
A marking $M'$ is said to be a sub-marking of a marking $M$ if all tokens in $M'$ for which $place_M(x) \neq \bot$ are mapped identically by $M$ and $M'$. 
We also define the upward closure of a set of markings $\mathcal{M}$  to be the set of all the markings that have a sub-marking in $\mathcal{M}$.

First, we show the following lemma:

\begin{lem}
Let $\varphi$ be a $\Sigma_2$ closed formula $\varphi =
\exists\vv{x} \sep \exists\vv{z}\sep \forall\vv{y} \sep \phi$ where
$\vv{x}$ and $\vv{y}$ are token variables, $\vv{z}$ are color variables, and $\phi$ is a $\Sigma_0$ formula. 
Then:
\begin{enumerate}[\em(1)]
 \item $\varphi$ has a model iff it has a model of size less than or equal to $|\vv{x}|$. 
 \item The upward closure of $\semcro{\varphi}$ w.r.t. the sub-marking ordering is effectively definable in $\Sigma_1$.
 \end{enumerate}
\end{lem}

\begin{proof}
\emph{Point (1):} $(\Leftarrow)$ Immediate.

\noindent $(\Rightarrow)$ 
Let $M$ be a model of $\varphi$. 
Then, there exists a vector of tokens $\vv{t} \subset\mathbb{N}$, 
a vector of colors $\vv{c}\subset \mathbb{C}$, and 
two mappings $\theta : \vv{x}\mapsto \vv{t}$ and $\nu : \vv{z}\mapsto\vv{c}$ such that 
$M \models_{\theta,\nu}\forall \vv{y}\sep\phi$.

Given any universally quantified formula it is always the case that if
it is satisfied by a marking then it is also satisfied by all its
sub-markings (w.r.t inclusion ordering).
In particular, we define $M'$ to be the sub-marking of $M$ that agrees only on
tokens in $\vv{t}$. Then, we have 
$M'\models_{\theta,\nu}\forall \vv{y}\sep \phi$, and therefore 
$M'\models \exists\vv{x}\sep\exists\vv{z}\sep\forall \vv{y}\sep\phi$.
Therefore, for the fragment
$\Sigma_2$, every satisfiable formula $\varphi=\exists
\vv{x}\sep\exists\vv{z}\sep\forall \vv{y}\sep\phi$ has a model of size less or
equal than $|\vv{x}|$. However this fact does not imply the
decidability of the satisfiability problem since the color domain is
infinite.

\smallskip
\noindent\emph{Point (2):} 
We show that for any formula $\varphi$ in $\Sigma_2$ it exists a formula $\widehat{\varphi}$ such that any model $M$ of $\varphi$ has a sub-marking $M'$ which is a model of $\widehat{\varphi}$, i.e.,
the upper closure of the set of models of $\varphi$ is given by the set of models of $\widehat{\varphi}$.
%

Let $\Theta$ be the set of all (partial or total) mappings $\sigma$ from elements of $\vv{y}$ to elements of $\vv{x}$. Then, we have that any model $M$ of $\varphi$ is also a model of $\exists\vv{x}\sep\exists\vv{z}\sep\varphi^{(1)}$ where
$$
\varphi^{(1)} = 
\bigwedge_{\sigma\in\Theta} 
\forall\vv{y}\sep\Big( 
\big(
(\bigwedge_{y\in dom(\sigma)} y=\sigma(y)) \land 
(\bigwedge_{y\not\in dom(\sigma)} \bigwedge_{x\in\vv{x}} y\ne x)
\big)\limp \varphi
\Big)
$$
This means that there exists a vector of tokens $\vv{t} \subset\mathbb{N}$, 
a vector of colors $\vv{c}\subset \mathbb{C}$, and 
two mappings $\theta : \vv{x}\mapsto \vv{t}$ and $\nu : \vv{z}\mapsto\vv{c}$ such that 
$M \models_{\theta,\nu}\varphi^{(1)}$. 
Consider now $M'$ to be the sub-marking of $M$ that agrees only on tokens in $\vv{t}$. Then, $M \models_{\theta,\nu}\varphi^{(1)}$ implies that:
$$
M' \models_{\theta,\nu} 
\bigwedge\limits_{\substack{\sigma\in\Theta \\ dom(\sigma)=\vv{y}}}
\forall\vv{y}\sep\big((\bigwedge_{y\in \vv{y}} y=\sigma(y)) \limp \varphi\big)
$$
which is equivalent to $M' \models \widehat{\varphi}$ with:
$$ \widehat{\varphi} = 
\exists \vv{x} \sep \exists \vv{z} \sep 
\bigwedge\limits_{\substack{\sigma\in\Theta \\ dom(\sigma)=\vv{y}}}
 \varphi [\sigma(\vv{y})/\vv{y}]
$$
By definition of $\widehat{\varphi}$, any of its minimal models is also a model of $\varphi$, and 
any of the models of $\varphi$ has a sub-model that is a model of $\widehat{\varphi}$.
\end{proof}

\medskip
A direct consequence of the lemma above is that it is possible to reduce the satisfiability problem from $\Sigma_2$ to $\Sigma_1$.
To prove the main theorem, we have to show that the satisfiability problem of $\Sigma_1$ can be reduced to one of $\Sigma_0$. 
Let us consider a $\Sigma_1$ formula $\varphi= \exists\vv{x}\sep\phi$ with $\phi$ in $\Sigma_0$. 

We do the following transformations: (1) we eliminate token equality
by enumerating all the possible equivalence classes for equality
between the finite number of variables in $\vv{x}$, then (2) we
eliminate formulas of the form $p(x)$ by enumerating all the possible
mappings from a token variable $x$ to places in $\mathbb{P}$, and (3) we replace
terms of the form $\delta_k(x)$ by fresh color variables. Let us
describe more formally these three transformations.

\paragraph{\emph{Step 1:}}
Let $\mathcal{B}(\vv{x})$ be the set of all possible equivalence classes
(w.r.t. the equality relation) over elements of $\vv{x}$: an
element $e$ in $\mathcal{B}(\vv{x})$ is a mapping from $\vv{x}$ to a
vector of variables $\vv{x}^{(e)}\subseteq\vv{x}$ that contains only one variable for
 each equivalence class.

We define $\phi_e$ to be $\phi[\vv{x}^{(e)} / \vv{x}]$ where, after the substitution, 
each atomic formula that is a token equality is replaced by ``$\mathit{true}$'' if
it is a trivial equality $x=x$ and by ``$\mathit{false}$'' otherwise.
Clearly $\varphi$ is equivalent to
$$
\bigvee_{e\in \mathcal{B}(\vv{x})}
\exists \vv{x}^{(e)}\sep\bigwedge_{i\neq j}  (x^{(e)}_i\neq x^{(e)}_j) \land \phi_e
$$

\paragraph{\emph{Step 2:}}
Similarly, we eliminate from $\phi_e$ the occurrences of formulas
$p(x)$. 
For a mapping $\sigma \in [\vv{x}^{(e)} \flc \mathbb{P}]$ and a variable $x$,
$\sigma(x)(x)$ is a formula saying that the variable $x$ is in the place
$\sigma(x)$. We use the notation $\sigma(\vv{x})(\vv{x})$ instead of
$\bigwedge_i \sigma(x_i)(x_i)$.
Again, for each value of $\sigma$ and $e$ we define $\phi_{e,\sigma}$
to be $\phi_e$ where each atomic sub-formula $p(x)$ is replaced by ``$\mathit{true}$'' if
$\sigma(x)=p$ and by ``$\mathit{false}$'' otherwise.

Then, we obtain an equivalent formula $\varphi_{=,p}$:
$$
\bigvee_{e\in \mathcal{B}(\vv{x})} \;\; 
\exists\vv{x}^{(e)}\sep 
  \bigwedge_{i\neq j} (x_i^{(e)}\neq x^{(e)}_j) 
  \land
  \bigvee_{\sigma \in [\vv{x}^{(e)} \flc \mathbb{P}]}\sigma(\vv{x}^{(e)})(\vv{x}^{(e)}) 
  \land 
  \phi_{e,\sigma}$$
where sub-formulas $\phi_{e,\sigma}$ do not contain any atoms of the form $x^{(e)}_i=x^{(e)}_j$ or $p(x^{(e)}_i)$.
Still, $\varphi_{e,\sigma}$ is not a $\Sigma_0$ formula, because it contains terms of the form $\delta_k(x)$.

\paragraph{\emph{Step 3:}}
For each coloring symbol $\delta_k$ and each token variable $x\in\vv{x}^{(e)}$, we
define a color variable $s_{k,x}$. Let $\vv{s}^{(e)}$ be a
vector containing all such color variables for each variable in
$\vv{x}^{(e)}$.
Then the formula $\varphi_{=,p}$ is  satisfiable iff the following $\Sigma_0$ formula is satisfiable:
$$
\bigvee_{e\in \mathcal{B}(\vv{x})} \;\; 
\exists \vv{s_e}\sep
  \bigvee_{\sigma \in [\vv{x}^{(e)} \flc \mathbb{P}]}
  \phi_{e,\sigma}[s_{k,x}/\delta_k(x)]_{1\le k\le N, x \in \vv{x}^{(e)}}
$$

Therefore, the satisfiability problem of $\Sigma_2$ can be reduced to satisfiability problem of $\Sigma_0$, which is decidable by hypothesis.
\end{proof}

\paragraph{\bf Complexity:}
From the last part of the proof, it follows that the satisfiability problem of a $\Sigma_1$ formula can be reduced in NP time to the satisfiability problem of a formula in the color logic $\fo(\mathbb{C},\Omega,\Xi)$.
Indeed, in \emph{Step 1} an equivalence relation between the existentially quantified variables $\vv{x}$ is guessed and in
\emph{Step 2} a place in $\mathbb{P}$ for the representative of each equivalence class is guessed, and given these guesses, a $\Sigma_0$ formula of linear size (w.r.t. the size of the original $\Sigma_1$) is built.

From the first part of the proof, it follows that the reduction from the satisfiability problem of a $\Sigma_2$ formula to the satisfiability of a $\Sigma_1$ formula is in general exponential.  
More precisely, 
if $\varphi=\exists\vv{x}\sep\exists\vv{z}\sep\forall \vv{y}\sep\phi$ 
is a $\Sigma_2$ formula, 
then the equi-satisfiable $\Sigma_1$ formula $\widehat{\varphi}$ is of size $O(|\vv{x}|^{|\vv{y}|} |\varphi|)$.
Therefore, the reduction of a $\Sigma_2$ formula to an equi-satisfiable formula in $\Sigma_0$ is in NEXPTIME. 

If the number of universally quantified variables (i.e., $|\vv{y}|$) is fixed, the reduction to an equi-satisfiable $\Sigma_1$ formula $\widehat{\varphi}$ becomes polynomial in the number of existentially quantified variables (i.e., $|\vv{x}|$). 
Then, in this case, the complexity of the reduction from $\Sigma_2$ formulas to equi-satisfiable $\Sigma_0$ formulas is in NP.



\section{Constrained Petri Nets}
 
We introduce hereafter models for networks of processes based on multiset rewriting systems with data.

 A \emph{Constrained Petri Net} ($\syst$) over the logic $\cml(\mathbb{C}^N,\Omega,\Xi)$ 
is a tuple $S = (\mathbb{P}, \Delta)$ 
where $\mathbb{P}$ is a finite set of places used in $\cml$, 
and $\Delta$ is a finite set of \emph{constrained transitions} of the form:
\begin{equation}
p_1, \ldots, p_n \; \longhook \; q_1, \ldots, q_m \; : \; \varphi
\label{rule-eqn}
\end{equation}
where $p_i,q_j\in\mathbb{P}$ for all $i\in \{1,\ldots,n\}$ and $j\in\{1,\ldots,m\}$,
and $\varphi$ is a $\cml(\mathbb{C}^N,\Omega,\Xi)$ formula called the \emph{transition guard} such that
(1) $FV(\varphi) = \{x_1,\ldots,x_n\}\cup\{y_1,\ldots,y_m\}$, and
(2) all occurences of variables $y_j$ in $\varphi$, for any $j \in \{1,\ldots, m\}$, are in terms of the form $\delta_k(y_j)$, for some $k \in \{1, \ldots, N \}$ .

Configurations of $\syst$s are colored markings. 
Intuitively, the application of a constrained transition to a colored marking $M$ (leading to a colored marking $M'$) consists in (1) deleting tokens represented by the variables $x_i$ from the corresponding places $p_i$, and in (2) creating tokens represented by variables $y_j$ in the places $q_j$, provided that the formula $\varphi$ is satisfied.
The formula $\varphi$ expresses constraints on the tokens in the marking $M$ (especially on the tokens which are deleted) as well as constraints on the colors of created tokens (relating these colors with those of the tokens in $M$).


Formally, given a $\syst$ $S$, we define a transition relation $\flc_S$ between colored markings as follows: 
 for every two colored markings $M$ and $M'$, 
 we have $M \flc_S M'$ iff there exists a constrained transition   
 of the form (\ref{rule-eqn}), and there exist tokens
  $ t_1, \ldots, t_n$ and $ t'_1, \ldots, t'_m$ s.t. $\forall i, j \in \{1, \ldots, n\} \sep i \neq j \imp t_i \neq t_j$, and
  $\forall i, j \in \{1, \ldots, m\} \sep i \neq j \imp t'_i \neq t'_j$, and
 \begin{enumerate}[(1)]
\item 
$\forall i \in \{1, \ldots, n\}  \sep place_M(t_i) = p_i$ and $place_{M'}(t_i) = \bot$,  
\item 
 $\forall i \in \{1, \ldots, m\}  \sep
place_M(t'_i) = \bot$ and $place_{M'}(t'_i) = q_i $,  
\item
$\forall t \in \mathbb{N}$, if $\forall i \in \{1, \ldots, n\} \sep t \neq t_i $ and 
$\forall j \in \{1, \ldots, m \} \sep t \neq t'_j$, then $M(t) = M'(t)$,
\item 
 $M \models_{\theta,\nu_\emptyset} \varphi[\mathit{color}_{M',k}(t'_j)/\delta_k(y_j)]_{1\le k\le N,1\le j\le m}$, where 
$\theta \in [T \flc \mathbb{N}]$ is a valuation of token variables such that $\forall i \in \{1, \ldots, n \} \sep \theta(x_i) = t_i$,  
and $\nu_\emptyset$ is the empty domain valuation of color variables.
\end{enumerate}

\smallskip
Given a colored marking $M$
let
$\post_S(M) = \{ M' \; : \; M \rightarrow_S M' \}$ be the set of all immediate successors of $M$, and let 
$\pre_S(M) = \{ M' \; : \; M' \rightarrow_S M \}$ be the set of all immediate predecessors of $M$. These definitions can be generalized straightforwardly to sets of markings. Given a set of colored markings $\mathcal{M}$, let $\widetilde{pre}_S(\mathcal{M}) = \overline{\pre_S(\overline{\mathcal{M}})}$, where $(\overline{\, \cdot \,})$ denotes complementation (w.r.t. the set of all colored markings).

\medskip

Given a fragment $\Theta$ of $\cml$, we denote by $\syst[\Theta]$ the class of $\syst$ where all transition guards are formulas in the fragment $\Theta$. Due to the (un)decidability results of sections \ref{sect-unsat} and \ref{sect-sat}, we focus in the sequel on the classes $\syst[\Sigma_2]$ and $\syst[\Sigma_1]$.





\section{Modeling Power of \texorpdfstring{$\syst$}{CPN}}
\label{sect-modeling}


We show in this section how constrained Petri nets can be used to model (unbounded) dynamic networks of parallel processes.
We assume that each process is defined by an extended automaton, i.e., 
a finite-control state machine supplied with variables and data structures ranging over potentially infinite domains (such as integer variables, reals, etc).  Processes running in parallel can communicate and synchronize using various kinds of mechanisms (rendez-vous, shared variables, locks, etc). Moreover, they can dynamically spawn new (copies of) processes in the network. 

More precisely, 
let $\mathcal{Q}$ be the finite set of control locations of the extended automata, 
and let $\vv{l}=(l_1,\ldots,l_N)$ and $\vv{g}=(g_1,\ldots,g_G)$ be the sets of local respectively global variables manipulated by these automata.
Transitions between control locations are labeled by actions which combine
(1) tests over the values of local/global variables,
(2) assignments of local/global variables,
(3) creation of a new process in a control location,
(4) synchronization (e.g., CCS-like rendez-vous, locks, priorities, etc.).
Tests over variables are first-order assertions based on a set of predicates $\Xi$. 
Variables are assigned with expressions built from local and global variables using a set of operations $\Omega$.

\begin{exa}
\label{ex:RW-EA}
Reader-writer is a classical synchronization scheme used in operating systems or other large scale systems. It allows processes to work (read and write) on shared data. Reader processes may read data in parallel but they are exclusive with writers. Writer processes can only work in exclusive mode with other processes. 
A \emph{reader-writer lock} is used to implement such kind of synchronization for any number of readers and writers. For this, readers have to acquire the lock in \emph{read mode} and writers in \emph{write mode}.

Let us consider the program proposed in~\cite{Flanagan-Freund-Qadeer-02} and using the reader-writer lock given in Table~\ref{tab:RW-use}. 
It consists of several \lstinline{Reader} and \lstinline{Writer} processes. The code of each process is given in Table~\ref{tab:RW-use}. (To keep the example readable, we omit the processes spawning the readers and writers.)
The program uses a global reader-writer lock variable \lstinline{l} and a global variable \lstinline{x} representing the shared data. Each  \lstinline{Reader} process has a local variable \lstinline{y}. 
Moreover, each process has a unique identifier represented by the \lstinline{_pid} local variable. Let us assume that \lstinline{x}, \lstinline{y}, and \lstinline{_pid} are of integer type.
\lstinline{Writer} processes change the value of the global variable \lstinline{x} after acquiring the lock in write mode.
\lstinline{Reader} processes are setting their local variable \lstinline{y} to a value depending on \lstinline{x} after acquiring the lock in read mode.

\begin{table}
\label{tab:RW-use}
\begin{center}
\begin{tabular}{lp{3eX}l}
\begin{lstlisting}[language=Java]
   process Writer:
1:  l.acq_write(_pid);
2:  x = g(x);
3:  l.rel_write(_pid);
4:
\end{lstlisting}
& &
\begin{lstlisting}[language=Java]
   process Reader:
1:  l.acq_read(_pid);
2:  y = f(x);
3:  l.rel_read(_pid);
4:
\end{lstlisting}
\end{tabular}
\end{center}
\caption{Example of program using reader-writer lock.}
\end{table}

\begin{figure}
\label{fig:RW-ea}
\begin{center}
\begin{tikzpicture}[node distance=1.8cm,auto,shorten >=1 pt,>=latex']
  \tikzstyle{initial}=[initial above]
  \node[state]  (w1) at  (0,1.7)     {$w1$} ;
  \node[state]  (w2) at  (5,1.7)     {$w2$} ;
  \node[state]  (w3) at  (8,1.7)     {$w3$} ;
  \node[state]  (w4) at (13,1.7)     {$w4$} ;
  \node[state]  (r1) at  (0,0)     {$r1$} ;
  \node[state]  (r2) at  (5,0)     {$r2$} ;
  \node[state]  (r3) at  (8,0)     {$r3$} ;
  \node[state]  (r4) at (13,0)     {$r4$} ;
  \path[->] (w1) edge node {\texttt{l.acq\_write(\_pid)}} (w2)
            (w2) edge node {\texttt{x:=g(x)}}             (w3)
            (w3) edge node {\texttt{l.rel\_write(\_pid)}} (w4)
            (r1) edge node {\texttt{l.acq\_read(\_pid)}}  (r2)
            (r2) edge node {\texttt{y:=f(x)}}             (r3)
            (r3) edge node {\texttt{l.rel\_read(\_pid)}}  (r4);
\end{tikzpicture}
\end{center}
\caption{Extended automata model for the program in Table~\ref{tab:RW-use}.}
\end{figure}
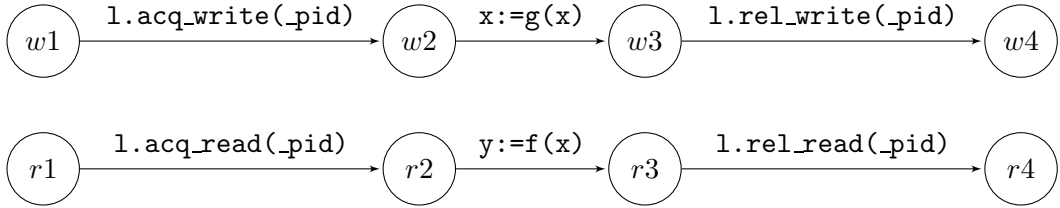
\end{exa}


Then, the extended automata model for the program in Table~\ref{tab:RW-use} is obtained by associating a control location to each line of the program and by labeling transitions between control locations with the statements of the program. 
The extended automata model is provided on Figure~\ref{fig:RW-ea}.

\medspace
We show hereafter how to build a $\syst$ model for a network of extended automata described above. The logic of markings used by the $\syst$ model is defined by $\cml(\mathbb{C}^N,\Omega,\Xi)$ where $N\ge 1$ is the (maximal) number of local variables of each process.
To each control location in $\mathcal{Q}$ and to each global variable in $\vv{g}$ is associated a unique place in $\mathbb{P}$.
Then, each running process is represented by a token, and in every marking, the place associated with the control location $q\in\mathcal{Q}$ contains precisely the tokens representing processes which are at the control location $q$.
The value of a local variable $l_i$ of a process represented by token $t$ is given by $\delta_i(t)$.
For global variables which are scalar, the place associated in $\mathbb{P}$ (for convenience, we use the same name for the place and the global variable) contains a single token whose first color stores the current value of the global variable. 
Global variables representing parametric-size collections may also be modeled by a place storing for each element of the collection a token whose first color gives the value of the element.
However, we cannot express in the decidable fragment $\Sigma_2$ of $\cml$ the fact that a multiset indeed encodes an array of elements indexed by integers in some given interval. The reason is that, while we can express in $\Pi_1$ the fact that each token has a unique color in the interval, we need to use $\Pi_2$ formulas to say that for each color in the interval there exists a token with that color. Nevertheless, for the verification of safety properties and for checking invariants, it is not necessary to require the latter property.

The set of constrained transitions of the $\syst$ associated with the network are obtained using the following general rules:

\paragraph{\textit{Test:}} A process action $q \by{\varphi(\vv{l},\vv{g})} q'$ where $\varphi$ is a $\fo(\mathbb{C},\Omega,\Xi)$ formula, is modeled by:
$$
q,g_1,\ldots,g_G \; \longhook \; q',g_1,\ldots,g_G \; : \; \varphi\eta \land \bigwedge^{G+1}_{i=1} \varphi_{id}(i)
$$
where $\eta$ is the substitution $[\delta_k(x_1)/l_k]_{1\le k\le N} [\delta_1(x_{k+1})/g_k]_{1\le k \le G}$, and
$$
\varphi_{id}(i) = \bigwedge^{N}_{j=1} \delta_j(y_i)=\delta_j(x_i)
$$

\paragraph{\textit{Assignment:}} A process action $q \by{(\vv{l},\vv{g}) := \vv{t}(\vv{l},\vv{g})} q'$ where $\vv{t}$ is a vector of  $N+G$ $\Omega$-terms, is modeled by:
$$
q,g_1,\ldots,g_G \; \longhook \; q',g_1,\ldots,g_G \; : \; 
\bigwedge^{N}_{i=1}\delta_i(y_1)=t_i\eta ~\land~ \bigwedge^{G}_{j=1}\delta_1(y_{j+1})=t_{N+j}\eta
$$
where $\eta$ is the substitution defined in the previous case.

In the modeling above, we consider that the execution of the process action is atomic. When tests and assignments are not atomic, we must transform each of them into a sequence of atomic operations: read first the global variables and assign their values to local variables, compute locally the new values to be assigned/tested, and finally, assign/test these values.

\paragraph{\textit{Process creation:}} An action spawning a new process $q \by{{\sf spawn}(q_0)}q'$ is modeled using a transition which creates a new token in the initial control location $q_0$ of the new process:
$$
q \longhook q', q_0 \; : \; \varphi_{id}(1) \land \varphi_0
$$
where $\varphi_0$ is $\bigwedge^{N}_{i=1} \delta_i(y_2)=null$ with $null$ the general initial value for local variables.

Moreover, it is possible to associate with each newly created process an identity classically defined by a positive integer number. For that, let us consider that the first color $\delta_1$ gives the identity of the process represented by the token.
To ensure that different processes have different identities, we express in the guard of every transition which creates a process the fact that the identity of this process does not exist already among tokens in places corresponding to control locations.
This can easily be done using a universally quantified ($\Pi_1$) formula. 
Therefore, a spawn action
$q \by{{\sf spawn}(q_0)}q'$ is modeled by:
$$
q \longhook q', q_0 \; : \; \varphi_{id}(1) \land \varphi_0'
$$
where 
$$\varphi_0' = \bigwedge^{N}_{i=2} \delta_i(y_2)=null ~\land~
\bigwedge_{\ell \in \mathcal{Q}} \forall t \in \ell \sep \neg (\delta_1(y_2)=\delta_1(t))
$$
The modeling of other actions (such as local/global variables assignment/test) can be modified accordingly in order to propagate the process identity through the transition.
Notice that process identities are different from token values. Indeed, in some cases (e.g., for modeling value passing as described further in this section), we may use different tokens (at some special places representing buffers for instance) having the same identity $\delta_1$.

\paragraph{\textit{Synchronization using locks:}}
Locks can be simply modeled using global variables storing the identity of the owner process, or a special value (e.g. $-1$) if it is free. A process who acquires the lock must check if it is free, and then write his identity:
$$
q, lock \longhook q', lock ~:~  \delta_1(x_2) = -1 \land \delta_1(y_2) = \delta_1(x_1) \land ...
$$
To release the lock, a process assigns $-1$ to the lock, which can be modeled in a similar way.
Other kinds of locks, such as reader-writer locks, can also be modeled in our framework as we show in the following example.

\begin{exa}
  \label{ex:RW-CPN}
Let us consider the extended automaton using the reader-writer lock given on Figure~\ref{fig:RW-ea}.
For each of its states we introduce a place (e.g., place $r3$ for state  \lstinline{r3}). 
For the scalar global variable \lstinline{x}, we create a place $x$ containing a single token.

The global variable representing the reader-writer lock is modeled following the  classical implementation~\cite{BenAri-05} which uses two variables:
\begin{enumerate}[$\bullet$]
\item a global \emph{integer} \lstinline{w} to store the identifier of the process holding the lock in write mode or $-1$ if no such process exists (process identifiers are supposed to be positive integers), and
\item a global \emph{set of integers} \lstinline{r} to represent the processes holding the lock in read mode.
\end{enumerate}
Acquire (\lstinline{acq_read}, \lstinline{acq_write}) and release (\lstinline{rel_read}, \lstinline{rel_write}) operations are accessing variables \lstinline{w} and \lstinline{r} atomically.
Then, we introduce a place $w$ (containing a single token) for the scalar global variable \lstinline{w}.
For the global set variable \lstinline{r} we introduce a place which contains a token for each \lstinline{Reader} process owning the lock.
By consequence, we need two colors for each token in the system: 
$\delta_1$ to store the identity of processes and 
$\delta_2$ to store the local variable \lstinline{y} for tokens representing \lstinline{Reader} processes and the value of global variables \lstinline{w} and \lstinline{x} for tokens in places $w$ resp. $x$.

Therefore, the $\syst$ model obtained is defined over the logic $\cml(\mathbb{N}^2, \{0, f,g\}, \{\le\})$,
its set of places is $\mathbb{P}=\{r1,r2,r3,r4,w1,w2,w3,w4,r,w,x\}$,
and its transition set $\Delta$ is given in Table~\ref{tab:RW}.
This model belongs to the class $\syst[\Pi_1]$.
\begin{table*}
\begin{center}
\begin{cmrs}
w_1: & \cmrsrule{w1,w}{w2,w}{%
  \begin{array}[t]{l}
  	\lnot(\exists z\in r \sep true) \land \delta_2(x_2)< 0 \land \delta_2(y_2)=\delta_1(x_1) \land  \\
	\delta_1(y_2)=\delta_1(x_2) \land \varphi_{id}(1)
  \end{array}}
\vspace{2mm}\\
w_2: & \cmrsrule{w2, x}{w3, x}{\delta_2(y_2) = g(\delta_2(x_2)) \land \delta_1(y_2)=\delta_1(x_2) \land \varphi_{id}(1)} 
\vspace{2mm}\\
w_3: & \cmrsrule{w3, w}{w4, w}{%
  \delta_2(x_2)=\delta_1(x_1) \land \delta_2(y_2)=-1 \land \delta_1(y_2)=\delta_1(x_2) \land  \varphi_{id}(1) }
\vspace{2mm}\\
~\\
r_1: & \cmrsrule{r1}{r2, r}{%
  (\forall z\in w\sep \delta_2(z)<0) \land \delta_1(y_2)=\delta_1(x_1) \land  \varphi_{id}(1)} 
\vspace{2mm}\\
r_2: & \cmrsrule{r2, x}{r3, x}{\delta_2(y_1)= f(\delta_2(x_2)) ~\land~ \delta_1(x_1)=\delta_1(y_1) ~\land~\varphi_{id}(2)} 
\vspace{2mm}\\
r_3: & \cmrsrule{r3, r}{r4}{ \delta_1(x_1)=\delta_1(x_2) \land \varphi_{id}(1)}
\\
\end{cmrs}
\end{center}
\label{tab:RW}
\caption{$\syst$ model of reader-writer lock.}
\end{table*}
\end{exa}

\paragraph{\textit{Value passing, return values:}}
Processes may pass/wait for values to/from other processes with specific identities. They can use for that shared arrays of data indexed by process identities. Such an array $A$ can be modeled in our framework using a special place containing for each process a token. Initially, this place is empty, and whenever a new process is created, a token with the same identity is added to this place. Then, to model that a process reads/writes on $A[k]$, we use a transition which takes from the place associated with $A$ 
the token with color $\delta_1$ equal to $i$, reads/modifies the value attached with this token, and puts the token again in the same place. 
For instance, an assignment action  $q \by{A[k] := e} q'$ executed by some process is modeled by the transition:
$$ q, A \longhook q', A \; : \;
\delta_1(x_2)=k ~\land~ \delta_2(y_2)=e ~\land~ \delta_1(y_2)=\delta_1(x_2) ~\land~ \varphi_{id}(1)
$$

\paragraph{\textit{Rendez-vous synchronization:}} Synchronization between a finite number of processes can be modeled as in Petri nets.  $\syst$s allow in addition to put constraints on the colors (data) of the involved processes.

\paragraph{\textit{Priorities:}} Various notion of priorities, such as priorities between different classes of processes (defined by properties of their colors), or priorities between different actions, can be modeled in $\syst$s. This can be done by imposing in transition guards that transitions (performed by processes or corresponding to actions) of higher priority are not enabled. These constraints can be expressed using $\Pi_1$ formulas. In particular, checking that a place $p$ is empty can be expressed by $\forall x \sep \lnot p(x)$. (Which shows that, as soon as universally quantified formulas are allowed in guards, our models are as powerful as Turing machines, even for color logics over finite domains.)



\section{Computing Post and Pre Images} 
\label{sec:postpre}



%


We address in this section the problem of characterizing in $\cml$ the immediate successors/predecessors of $\cml$ definable sets of colored markings.

\begin{theorem}
\label{thm-reach}
Let $S$ be a $\syst[\Sigma_n]$, for $n \in \{1, 2 \}$. Then,
for every $\cml$ closed formula $\varphi$ in the fragment $\Sigma_n$,
the sets $\post_S (\semcro{\varphi})$ and $\pre_S (\semcro{\varphi})$ are effectively definable by $\cml$ formulas in the same fragment $\Sigma_n$. 
\end{theorem}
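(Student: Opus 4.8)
The plan is to reduce to a single constrained transition and then to a syntactic \emph{relativization} of $\varphi$. Since $\post_S$ and $\pre_S$ distribute over the finite union of transitions in $\Delta$, and since $\Sigma_n$ is closed under disjunction, it suffices to build, for each transition $\tau : p_1,\ldots,p_n \longhook q_1,\ldots,q_m : \psi$ with $\psi\in\Sigma_n$, formulas $\post_\tau(\varphi)$ and $\pre_\tau(\varphi)$ in $\Sigma_n$ that define $\post_\tau(\semcro{\varphi})$ and $\pre_\tau(\semcro{\varphi})$; the answer for $S$ is then the disjunction over $\tau\in\Delta$. I would also first put $\varphi$ (and $\psi$) in special form, so that it has no atoms $p(x)$ and only guarded token quantifiers $\exists x\in p$. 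Recall that, by the definition of a $\syst$, each produced variable $y_j$ occurs in $\psi$ only inside terms $\delta_k(y_j)$; hence substituting fresh color variables for those terms eliminates $y_1,\ldots,y_m$ from $\psi$ altogether.

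For $\pre_\tau$ I would introduce fresh token variables $\vec x = x_1,\ldots,x_n$ for the consumed tokens and $\vec y = y_1,\ldots,y_m$ for the produced tokens, together with fresh color variables $c_{j,k}$ ($1\le j\le m$, $1\le k\le N$) for the colors the transition assigns the produced tokens (and another batch of fresh color variables for the now-irrelevant colors of the consumed tokens in the target), and set
\[
\pre_\tau(\varphi) \;=\; \exists \vec x\, \vec y \sep \exists \vec c \sep \Bigl( \bigwedge_{i=1}^n p_i(x_i) \;\land\; \mathrm{distinct}(\vec x,\vec y) \;\land\; \psi[\, c_{j,k}/\delta_k(y_j)\,]_{k,j} \;\land\; \widehat\varphi \Bigr),
\]
where $\widehat\varphi$ is the relativization of $\varphi$ recording that the target marking is obtained from the source by deleting the $x_i$ and creating the $y_j$ at $q_j$ with colors $(c_{j,k})_k$. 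The relativization is defined by induction on the special-form structure of $\varphi$: an atom $q(z)$ is replaced by its forced truth value when $z$ is one of the $x_i$ or $y_j$; a term $\delta_k(y_j)$ becomes $c_{j,k}$; and a guarded quantifier $\exists z\in q\sep\chi$ is unfolded into a finite disjunction according to whether its witness is an ``old'' token still lying at $q$ in the target, or one of the finitely many affected named tokens that lies at $q$ in the target. The construction of $\post_\tau(\varphi)$ is dual: the source marking is reconstructed from the target, and the colors that are free in the source (the source colors of the consumed tokens, constrained only by $\psi$ and $\varphi$, and the colors of the produced tokens while still absent) are bound by fresh existential color variables. Correctness of both is then a direct unfolding of the definitions of $\flc_S$ and of $\models$.

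It remains to check that the fragment is preserved. All quantifiers added on the outside are existential (over tokens and colors), so they merge with the leading existential block of $\psi$ and of $\widehat\varphi$. For $n=1$, relativizing a $\Sigma_1$ formula introduces only further existential token/color quantifiers (a guarded $\exists z\in q$ becomes a disjunction of such), so $\pre_\tau(\varphi),\post_\tau(\varphi)\in\Sigma_1$. For $n=2$ the only subtlety is relativizing a universal block $\forall\vec v\sep\chi$ with $\chi\in\Sigma_0$ --- the only shape in which such a block appears inside a $\Sigma_2$ formula: it becomes the conjunction of one formula $\forall\vec v\sep(\ldots)\imp\widehat\chi$ with finitely many instances $\widehat\chi[x_i/v]$, $\widehat\chi[y_j/v]$, and since each relativized $\widehat\chi$ stays in $\Sigma_0$, this conjunction is in $\Pi_1$; together with the $\Pi_1$ matrix of $\psi$ this yields a $\Sigma_2$ formula.

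The part I expect to be delicate is getting the relativization exactly right rather than its overall shape. One must be careful that several consumed (or produced) tokens may lie in the same place, so each guarded quantifier must branch over all of them; that the colors of $\bot$-tokens are not pinned down by the transition semantics and so have to be carried by fresh color variables wherever $\varphi$ observes them; and that the ``invented'' colors $\delta_k(y_j)$ occurring in the guard are identified consistently with the colors under which $\varphi$ sees the produced tokens in the target. Once the relativization is fixed, both correctness and $\Sigma_n$-membership follow by the routine arguments sketched above.
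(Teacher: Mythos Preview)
Your approach is essentially the paper's. Both reduce to a single transition, put $\varphi$ and $\psi$ in special form, and perform a syntactic rewriting that case-splits each guarded token quantifier on whether its witness is one of the finitely many named tokens touched by $\tau$ or an ``untouched'' token. The paper factors this rewriting into two operators applied in sequence: a deletion operator $\ominus(\vec z,\mathtt{loc},\mathtt{col})$ (whose key rule $\ominus_6$ is exactly your disjunctive unfolding of $\exists z\in p$, together with $\ominus_2$ replacing $\delta_k(z)$ by a fresh color variable and $\ominus_3$ collapsing token equalities involving a deleted variable to $\mathit{true}/\mathit{false}$) and an addition operator $\oplus(\vec z,\mathtt{loc})$ (whose key rule $\oplus_6$ conjoins $z'\neq z$ under each guarded quantifier). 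It then sets
\[
\post_\tau(\varphi)=\exists\vec y\in\vec q\sep\exists\vec c\sep\bigl((\varphi\land\psi)\ominus(\vec x,\ldots)\bigr)\oplus(\vec y,\ldots),
\]
\[
\pre_\tau(\varphi)=\exists\vec x\in\vec p\sep\exists\vec c\sep\bigl((\varphi\oplus(\vec x,\ldots))\land\psi\bigr)\ominus(\vec y,\ldots).
\]
Your single ``relativization'' $\widehat\varphi$ is the composite of these two, and your $\Sigma_n$-preservation argument matches the paper's (which simply observes that $\ominus_6$ and $\oplus_6$ do not increase alternation depth).

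One detail in your displayed $\pre_\tau(\varphi)$ is not yet right and is worth fixing rather than leaving to the ``delicate'' disclaimer: you existentially quantify $\vec y$ as actual tokens of the source marking but never assert $\bot(y_j)$. Without that constraint a witness $y_j$ might already sit at some place $r$ in the source, and then a quantifier $\forall z\in r$ inside $\widehat\varphi$ would wrongly range over $y_j$ with its \emph{source} colors rather than the intended $c_{j,k}$. The paper sidesteps this by never quantifying the ``absent-side'' tokens as tokens at all: for $\pre$ the $y_j$ are eliminated by $\ominus$ (their colors become the fresh $\vec c$, and any token equality involving them becomes $\mathit{false}$ by rule $\ominus_3$); dually for $\post$ the $x_i$ are eliminated. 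Either add $\bigwedge_j \bot(y_j)$ to your formula, or follow the paper's convention and treat the absent-side tokens purely by substitution; with that fix your argument goes through.
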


\begin{proof}
Let $\varphi$ be a closed formula, and let $\tau$ be a transition 
$\vv{p} \longhook \vv{q} : \psi$ of the system $S$. 
W.l.o.g., we suppose that $\varphi$ and $\psi$ are in special form (see definition in Section~\ref{sect_special_form}). Moreover, we suppose that variables in $\vv{x}$ and $\vv{y}$ introduced by $\tau$ have fresh names, i.e., different from those of variables quantified in $\varphi$ and $\psi$.
We define hereafter the formulas $\varphi_\post = \post_S (\semcro{\varphi})$ and $\varphi_\pre = \pre_S (\semcro{\varphi})$ for this single transition. The generalization to the set of all transitions is straightforward.

The construction of the formulas $\varphi_\post$ and $\varphi_\pre$ is not trivial because our logic does not allow to use quantification over places and color mappings in $[\mathbb{N}\flc\mathbb{C}]$. 
Intuitively, the idea is to express first the effect of deleting/adding tokens, and then composing these operations
to compute the effect of a transition. 

Let us introduce two transformations $\ominus$ and $\oplus$ corresponding to deletion and creation of tokens. These operations are inductively defined on the structure of special form formulas in Tables~\ref{tab:ominus} and~\ref{tab:oplus}.

The operation $\ominus$ is parameterized by a vector $\vv{z}$ of token variables to be deleted, a mapping $\mathtt{loc}$ associating with token variables in $\vv{z}$ the places from which they will be deleted, and a mapping $\mathtt{col}$ associating with each token variable in $\vv{z}$ and eack $k\in\{1,\ldots,N\}$ a fresh color variable in $C$.
Intuitively, $\ominus$ projects a formula on all variables in $\vv{z}$.
Rule $\ominus_2$ substitutes in a color formula $r(\vv{t})$  all occurences of colored tokens in $\vv{z}$ by fresh color variables given by the mapping $\mathtt{col}$.
A formula $x=y$ is unchanged by the application of $\ominus$ if the token variables $x$ and $y$ are not in $\vv{z}$; otherwise, rule $\ominus_3$ replaces $x=y$ by ``$\mathit{true}$'' if it is trivially true (i.e., we have the same variable in both sides of the equality) or by ``$\mathit{false}$'' if $x$ (or $y$) is in $\vv{z}$. Indeed, each token variable in $\vv{z}$ represents (by the semantics of $\syst$) a different token, and since this token is deleted by the transition rule, it cannot appear in the reached configuration.
Rules $\ominus_4$ and $\ominus_5$ are straightforward.
Finally, rule $\ominus_6$ does a case splitting according to the fact whether a deleted token is precisely the one referenced by the existential token quantification or not.

The operation $\oplus$ is parameterized by a vector $\vv{z}$ of token variables to be added and a mapping $\mathtt{loc}$ associating with each variable in $\vv{z}$ the place in which it will be added.
Intuitively, $\oplus$ transforms a formula taking into account that the tokens added by the transition were not present in the previous configuration (and therefore not constrained by the original formula describing the configuration before the transition).
Then, the application of $\oplus$  has no effect on color formulas $r(\vv{t})$ (rule $\oplus_2$).
When equality of  tokens is tested, rule $\oplus_3$ takes into account that all added tokens are distinct and different from the existing tokens. For token quantification, rule $\oplus_6$ says that quantified tokens of the previous configuration cannot be equal to the added tokens.

\newcounter{ominusrule}
\def\ruleominus{\addtocounter{ominusrule}{1} \ominus_{\arabic{ominusrule}}:}

\begin{table*}
\begin{center}
$
 \begin{array}{lrcl}
%
%
   \ruleominus &
   \mathit{true} \ominus(\vv{z},\mathtt{loc},\mathtt{col}) &\; \;  = \; \; &
    \mathit{true}
   \vspace{1mm}\\
   \ruleominus &
   r(\vv{t}) \ominus(\vv{z},\mathtt{loc},\mathtt{col}) &\; \;  = \; \; &
    r(\vv{t}) [\mathtt{col}(z)(k) / \delta_k(z) ]_{1\le k \le N, z \in \vv{z} }
   \vspace{1mm}\\
   \ruleominus &
   (x=y)\ominus(\vv{z},\mathtt{loc},\mathtt{col}) & \; \;  = \; \; &
   \left\{ \begin{array}{ll}
     x=y & \quad\mathrm{if}~x,y\not\in\vv{z} \\
     \mathit{true} & \quad\mathrm{if}~x\equiv y\\
     \mathit{false}& \quad\mathrm{otherwise}
   \end{array}\right.
   \vspace{1mm}\\
   \ruleominus &
   (\neg \varphi ) \ominus(\vv{z},\mathtt{loc},\mathtt{col}) & \; \;  = \; \; &
   \neg (\varphi  \ominus(\vv{z},\mathtt{loc},\mathtt{col}))
   \vspace{1mm}\\
   \ruleominus &
   (\varphi_1 \vee \varphi_2) \ominus(\vv{z},\mathtt{loc},\mathtt{col}) &\; \;  = \; \; &
   (\varphi_1  \ominus(\vv{z},\mathtt{loc},\mathtt{col})) \vee (\varphi_2  \ominus(\vv{z},\mathtt{loc},\mathtt{col}))
   \vspace{1mm}\\
   \ruleominus &
   (\exists x \in p. \; \varphi)\ominus(\vv{z},\mathtt{loc},\mathtt{col}) & \; \;  = \; \; &
   \exists x \in p. \; (\varphi\ominus(\vv{z},\mathtt{loc},\mathtt{col})) \lor \\
   & && \phantom{\exists x \in p. \;} 
   \bigvee_{z \in \vv{z} : \mathtt{loc}(z)=p} (\varphi[z/x])\ominus(\vv{z},\mathtt{loc},\mathtt{col})
   \vspace{3mm}
 \end{array}
$
\end{center}
\caption{Definition of the  $\ominus$ operator.}
\label{tab:ominus}
\end{table*}

\newcounter{oplusrule}
\def\ruleoplus{\addtocounter{oplusrule}{1} \oplus_{\arabic{oplusrule}}:}

\begin{table*}
\begin{center}
$
 \begin{array}{lrcl}
%
%
   \ruleoplus &
   \mathit{true} \oplus(\vv{z},\mathtt{loc}) &\; \;  = \; \; &
    \mathit{true}
   \vspace{1mm}\\
   \ruleoplus &
   r(\vv{t}) \oplus(\vv{z},\mathtt{loc}) & = & r(\vv{t})
   \vspace{1mm}\\
   \ruleoplus &
   (x=y)\oplus(\vv{z},\mathtt{loc}) & = &
   \left\{ \begin{array}{ll}
     x=y & \quad\mathrm{if}~x,y\not\in\vv{z} \\
     \mathit{true} & \quad\mathrm{if}~x\equiv y\\
     \mathit{false}& \quad\mathrm{otherwise}
   \end{array}\right.
   \vspace{1mm}\\
   \ruleoplus &
   (\neg \varphi ) \oplus(\vv{z},\mathtt{loc}) & = &
   \neg (\varphi  \oplus(\vv{z},\mathtt{loc}))
   \vspace{1mm}\\
   \ruleoplus &
   (\varphi_1 \vee \varphi_2) \oplus(\vv{z},\mathtt{loc}) & = &
   (\varphi_1  \oplus(\vv{z},\mathtt{loc})) \vee (\varphi_2  \oplus(\vv{z},\mathtt{loc}))
   \vspace{1mm}\\
   \ruleoplus &
   (\exists x \in p.~\varphi)\oplus(\vv{z},\mathtt{loc}) & = &
   \exists x \in p.~(\varphi\oplus(\vv{z},\mathtt{loc})) \land 
   \bigwedge_{z \in \vv{z}: \mathtt{loc}(z)=p} \lnot(x=z) 
  \end{array}
  $
  \end{center}
\caption{Definition of the $\oplus$ operator.}
\label{tab:oplus}
\end{table*}

Therefore, we define $\varphi_{\post_\tau}$ to be the formula:
\begin{equation}
\label{eq-post}
\exists\vv{y} \in \vv{q}  \sep \exists\vv{c} \sep
   \big(  (\varphi \land\psi)
	\ominus(\vv{x},\vv{x}\mapsto \vv{p},\vv{x} \mapsto [1,N] \mapsto \vv{c}))  \big) 	\oplus(\vv{y},\vv{y}\mapsto \vv{q}) 
\end{equation}
In the formula above, we first delete the tokens corresponding to $\vv{x}$ from the current configuration $\varphi$ intersected with the guard of the rule $\psi$. Then, we add tokens corresponding to $\vv{y}$. Finally, we close the formula by quantifying existentially 
(1) the color variables $\vv{c}$ corresponding to colors of deleted tokens $\vv{x}$ and 
(2) the token variables $\vv{y}$ corresponding to the added tokens.

Similarly, we define $\varphi_{\pre_\tau}$ to be the formula:
\begin{equation}
\label{eq-pre}
\exists\vv{x} \in \vv{p} \sep \exists\vv{c} \sep
    \big((\varphi \oplus(\vv{x},\vv{x}\mapsto\vv{p})) \land \psi \big)	\ominus(\vv{y},\vv{y}\mapsto\vv{q},\vv{y}\mapsto [1,N]\mapsto \vv{c}) ) 
\end{equation}
In the formula above, we first add to the current configuration the tokens represented by the left hand side of the rule $\vv{x}$ in order to obtain a configuration on which the guard $\psi$ can be applied. Then, we remove the tokens added by the rule using token variables $\vv{y}$. Finally, we close the formula by quantifying existentially 
(1) the color variables $\vv{c}$ corresponding to colors of removed tokens $\vv{y}$ and 
(2) the token variables $\vv{x}$ corresponding to the added tokens.
It is easy to see that if $\varphi$ and $\psi$ are in the $\Sigma_n$ fragment, for any $n \geq 1$, then both of the formulas $\varphi_{\post_\tau}$ and $\varphi_{\pre_\tau}$ are also in the same fragment $\Sigma_n$.
\end{proof}

\paragraph{\bf Complexity:}
Let $\varphi$ be a $\Sigma_2$ formula,  and let
$\tau = \vv{p} \longhook \vv{q} : \psi$ be a transition of a system $S\in\syst[\Sigma_2]$.
Then the sizes of formulas $\post_\tau(\varphi)$ and $\pre_\tau(\varphi)$ are in general exponential in the number of quantifiers in $\varphi\land \psi$.
More precisely, the size of the $\post$ (resp. $\pre$) image of $\varphi$ is $O(|\vv{p}|^n)$ (resp. $O(|\vv{q}|^n)$) times greater than the size of the formula $\varphi\land \psi$, where $n$ is the number of quantifiers in $\varphi\land \psi$. This exponential blow-up is due to the rule $\ominus_6$ in Table~\ref{tab:ominus}.
If the number of the quantified variables in $\varphi\land\psi$ is fixed, then the size of $\post_\tau(\varphi)$ (resp. $\pre_\tau(\varphi)$), increases polynomially w.r.t. the size of the formula $\varphi\land \psi$. 

\smallskip
\begin{exa}
  \label{ex:post}
  To illustrate the construction given in the proof above, we consider the logic $\cml(\mathbb{N},\{0\},\{\le\})$ and the $\syst$ $S=(\mathbb{P},\Delta)$ with $\mathbb{P}=\{p,q,r\}$ and $\Delta$ containing the following transition:
$$
\tau:\;\; p \Erule q \;:\; \delta_1(x_1) \ge 0 \land \lnot(\exists t\in q\sep\delta_1(t)=\delta_1(y_1))
$$
  Intuitively, this transition moves a token with positive color from place $p$ to place $q$ and assigns to its color a value non-deterministically chosen in $\mathbb{N}$ but different from all colors of tokens in place $q$.

  We illustrate the computation of $\post$-image of $\tau$ on two formulas in special form 
  $\varphi_1= (\exists x\in r\sep \mathit{true})$ and 
  $\varphi_2= (\forall x,y\in p\sep x=y)$.
  Intuitively, $\varphi_1$ says that the place $r$ contains at least a token, 
  and $\varphi_2$ says that any two tokens in place $p$ are equal, i.e., place $p$ contains at most one token.
  Since $\varphi_1$ is not speaking about places involved in the transition $\tau$ (i.e., $p$ and $q$), we expect to obtain a stable $\post$-image by $\tau$, i.e., $\varphi_{1,\post_\tau} \implies \varphi_1$.
  Conversely, $\varphi_2$ speaks about a place changed by $\tau$, so its image cannot be stable.
  In the remainder of this example we give the details of the construction of the $\post$-images by $\tau$ for $\varphi_1$ and $\varphi_2$.

  \medskip
  By applying the equation~\ref{eq-post} to $\varphi_1$ we obtain:
  \begin{eqnarray*}
\varphi_{1,\post_\tau} & = & \exists y_1\in q\sep\exists c_{1,x_1}\sep \begin{array}[t]{l}
(\varphi_1\land 
 \delta_1(x_1) \ge 0\land \lnot(\exists t\in q\sep\delta_1(t)=\delta_1(y_1)) \\
 )\ominus(\{x_1\},\{x_1\mapsto p\}, \{x_1\mapsto 1 \mapsto c_{1,x_1}\}) \\
 \phantom{)}\oplus (\{y_1\}, \{y_1\mapsto q\})
\end{array}
\end{eqnarray*}
  In the following, we denote by $\mathtt{loc}_{x_1}$, $\mathtt{col}_{x_1}$, and $\mathtt{loc}_{y_1}$ the mappings $\{x_1\mapsto p\}$, $\{x_1\mapsto 1 \mapsto c_{1,x_1}\}$, resp. $\{y_1\mapsto q\}$.
  
  First, we compute the effect of applying the $\ominus$ operation on $\varphi_1$ and the guard of $\tau$ using the rules given in Table~\ref{tab:ominus}. By applying several times rules $\ominus_4$ and $\ominus_5$ to distribute $\ominus$ over $\land$ and $\lnot$ we obtain:
  \begin{eqnarray*}
& & \varphi_1\ominus(\{x_1\},\mathtt{loc}_{x_1},\mathtt{col}_{x_1})  \\
& \land & (\delta_1(x_1) \ge 0) \ominus(\{x_1\},\mathtt{loc}_{x_1},\mathtt{col}_{x_1}) \\
& \land & \lnot \big( (\exists t\in q\sep\delta_1(t)=\delta_1(y_1)) \ominus(\{x_1\},\mathtt{loc}_{x_1},\mathtt{col}_{x_1}) \big) 
\end{eqnarray*}
   By applying rule $\ominus_6$ two times, $\ominus_2$ one time, and by replacing the empty disjunction by $\mathit{false}$, we obtain:
  \begin{eqnarray*}
& & \big(\exists x\in r\sep \mathit{true}\ominus(\{x_1\},\mathtt{loc}_{x_1},\mathtt{col}_{x_1}) \lor \mathit{false}\big)\\
& \land & (c_{1,x_1} \ge 0) \\
& \land & \lnot \big(\exists t\in q\sep (\delta_1(t)=\delta_1(y_1)) \ominus(\{x_1\},\mathtt{loc}_{x_1},\mathtt{col}_{x_1}) \lor \mathit{false}\big) 
\end{eqnarray*}
   Rules $\ominus_1$ and $\ominus_2$ are applied to obtain the final result:  
  \begin{eqnarray*}
& & (\exists x\in r\sep \mathit{true})  \\
& \land & (c_{1,x_1} \ge 0) \\
& \land & \lnot (\exists t\in q\sep\delta_1(t)=\delta_1(y_1)) 
\end{eqnarray*}

  On the above formula is applied the $\oplus$ transformation using the rules given in Table~\ref{tab:oplus}. 
  By applying several times rules $\oplus_4$ and $\oplus_5$ to distribute $\oplus$ over $\land$ and $\lnot$, we obtain:
  \begin{eqnarray*}
& & (\exists x\in r\sep \mathit{true})\oplus(\{y_1\}, \mathtt{loc}_{y_1})  \\
& \land & (c_{1,x_1} \ge 0) \oplus(\{y_1\}, \mathtt{loc}_{y_1}) \\
& \land & \lnot \big( (\exists t\in q\sep\delta_1(t)=\delta_1(y_1))\oplus(\{y_1\}, \mathtt{loc}_{y_1})\big)
  \end{eqnarray*}
  By applying two times rules $\oplus_6$ and $\oplus_2$, and by replacing empty conjunctions by $\mathit{true}$ we obtain:
  \begin{eqnarray*}
& & (\exists x\in r\sep \mathit{true}\oplus(\{y_1\}, \mathtt{loc}_{y_1}) ~\land~\mathit{true})  \\
& \land & (c_{1,x_1} \ge 0) \\
& \land & \lnot \big(\exists t\in q\sep(\delta_1(t)=\delta_1(y_1))\oplus(\{y_1\}, \mathtt{loc}_{y_1}) ~\land~ \lnot(t=y_1)\big)
  \end{eqnarray*}
   Rules $\oplus_1$ and $\oplus_2$ are applied to obtain the final result:  
   \begin{eqnarray*}
& & (\exists x\in r\sep \mathit{true})  \\
& \land & (c_{1,x_1} \ge 0) \\
& \land & \lnot \big(\exists t\in q\sep \delta_1(t)=\delta_1(y_1) ~\land~ \lnot(t=y_1)\big)
   \end{eqnarray*}
   Therefore, the immediate successors of $\varphi_1$ by $\tau$ are given by the following $\cml$ formula:
   \begin{eqnarray*}
\lefteqn{\varphi_{1,\post_\tau}} \\
 & = & \exists y_1\in q\sep\exists c_{1,x_1}\sep
(\exists x\in r\sep \mathit{true}) 
\land (c_{1,x_1} \ge 0) \land 
\lnot \big(\exists t\in q\sep \delta_1(t)=\delta_1(y_1) \;\land\; \lnot(t=y_1)\big)
\\
& = & (\exists x\in r\sep \mathit{true}) \land 
 \big(\exists y_1\in q\sep\exists c_{1,x_1}\sep
(c_{1,x_1} \ge 0) \land 
\lnot \big(\exists t\in q\sep \delta_1(t)=\delta_1(y_1) \;\land\; \lnot(t=y_1)\big)\big)
   \end{eqnarray*}
   where the last equality has been obtained by applying classical rules for quantifiers.
   It is easy now to see that $\varphi_{1,\post_\tau}\implies \varphi_1$.

\medskip
  Now, we consider $\varphi_2$ and we apply the equation~\ref{eq-post} to obtain:
  \begin{eqnarray*}
\varphi_{2,\post_\tau} & = & \exists y_1\in q\sep\exists c_{1,x_1}\sep \begin{array}[t]{l}
(\varphi_2\land 
 \delta_1(x_1) \ge 0\land \lnot(\exists t\in q\sep\delta_1(t)=\delta_1(y_1)) \\
 )\ominus(\{x_1\},\mathtt{loc}_{x_1},\mathtt{col}_{x_1}) \\
 \phantom{)}\oplus(\{y_1\}, \mathtt{loc}_{y_1})
\end{array}
  \end{eqnarray*}
  We only detail the effect of $\oplus$ and $\ominus$ operators on $\varphi_2$ since the computation for the conjunct representing the guard of $\tau$ is the same as for $\varphi_1$.

  In order to apply $\ominus$ on $\varphi_2$, we use the equivalent form of $\varphi_2$, i.e., $\lnot(\exists x\in p \sep \exists y\in p\sep \lnot(x=y))$.
  Then, the effect of the $\ominus$ operation on $\varphi_2$ is obtained by applying two times the rules $\ominus_4$ and $\ominus_6$ as follows:
  \begin{eqnarray*}
\varphi_2 \ominus(\{x_1\},\mathtt{loc}_{x_1},\mathtt{col}_{x_1}) 
& = & \lnot\big( (\exists x\in p\sep 
\begin{array}[t]{l}
(\exists y\in p \sep \lnot(x=y)\ominus(\{x_1\},\mathtt{loc}_{x_1},\mathtt{col}_{x_1})) \\
~~\lor  \lnot(x=x_1)\ominus(\{x_1\},\mathtt{loc}_{x_1},\mathtt{col}_{x_1}) \;)
\end{array} \\
& & \quad\lor~(\exists y\in p\sep \lnot(x_1=y))\ominus(\{x_1\},\mathtt{loc}_{x_1},\mathtt{col}_{x_1}) \big)
  \end{eqnarray*}
  By applying several times rules $\ominus_3$, $\ominus_4$, and $\ominus_6$ we obtain:
  \begin{eqnarray*}
\lefteqn{\varphi_2 \ominus(\{x_1\},\mathtt{loc}_{x_1},\mathtt{col}_{x_1})} \\
& = & \lnot\big(  
\begin{array}[t]{l}
 (\exists x\in p\sep 
 \exists y\in p \sep \lnot(x=y) \lor  \lnot(\mathit{false})\;) \\
\lor~(\exists y\in p\sep \lnot(x_1=y)\ominus(\{x_1\},\mathtt{loc}_{x_1},\mathtt{col}_{x_1}) \lor \lnot(x_1=x_1)\ominus(\{x_1\},\mathtt{loc}_{x_1},\mathtt{col}_{x_1})) \big)
\end{array} \\
& = & \lnot\big( 
\begin{array}[t]{l} 
(\exists x\in p\sep 
\exists y\in p \sep \mathit{true}) \\
\lor~(\exists y\in p\sep \lnot(\mathit{false}) \lor \lnot(\mathit{true})) \big)
\end{array} \\
& = & \lnot(\exists x\in p\sep\exists y\in p \sep \mathit{true})\land\lnot(\exists y\in p\sep \mathit{true}) \\
& = & (\forall x,y\in p \sep \mathit{false})\land(\forall y\in p\sep \mathit{false})\\
& = & (\forall x\in p \sep \mathit{false})
  \end{eqnarray*}
  The last equivalence above is obtained from the classical properties of quantifiers. The final result is the one expected intuitively: the effect of removing the token $x_1$ in $p$ from a configuration where there is at most one token in $p$ (see meaning of $\varphi_2$) is a configuration with no token in $p$.	
	
  It is easy to show that the effect of $\oplus(\{y_1\}, \mathtt{loc}_{y_1})$ on the last formula above is null.
  Therefore, the immediate successors of $\varphi_2$ by $\tau$ are given by the following $\cml$ formula:
   \begin{eqnarray*}
\lefteqn{\varphi_{2,\post_\tau}} \\
 & = & \exists y_1\in q\sep\exists c_{1,x_1}\sep
(\forall x\in p\sep \mathit{false}) 
\land (c_{1,x_1} \ge 0) \land 
\lnot \big(\exists t\in q\sep \delta_1(t)=\delta_1(y_1) \;\land\; \lnot(t=y)\big)
\\
& = & (\forall x\in p\sep \mathit{false}) \land 
\big(\exists y_1\in q\sep\exists c_{1,x_1}\sep
(c_{1,x_1} \ge 0) \land 
\lnot \big(\exists t\in q\sep \delta_1(t)=\delta_1(y_1) \;\land\; \lnot(t=y)\big)\big)
   \end{eqnarray*}

More complex examples of $\post$-image computations for the  reader-writer lock example are provided in Section~\ref{ex:RW-INV}. 
\end{exa}

\section{Applications in Verification}
\label{sect-verif}
 
We show in this section how to use the results of the previous section to perform various kinds of analysis. Let us fix for the rest of the section a first order logic $\fo(\mathbb{C},\Omega,\Xi)$ with a decidable satisfiability problem and a $\syst$ $S$.
  
\subsection{Pre-post condition reasoning}
 
Given a transition $\tau$ in $S$ and given two formulas $\varphi$ and $\varphi'$,
$\langle \varphi, \tau, \varphi' \rangle$ is a Hoare triple if whenever the condition $\varphi$ holds, the condition $\varphi'$ holds after the execution of $\tau$. In other words, we must have $\post_\tau (\semcro{\varphi}) \incl \semcro{\varphi'}$, or equivalently that
$\post_\tau (\semcro{\varphi}) \cap \semcro{\neg \varphi'} = \emptyset$. Then, by Theorem \ref{thm-reach} and Theorem \ref{thm-sat-dec} we deduce the following:
\begin{theorem}
  If $S$ is a $\syst[\Sigma_2]$, then the problem whether $\langle \varphi, \tau, \varphi' \rangle$ is a Hoare triple is decidable for every  transition $\tau$ of $S$, every formula 
 $\varphi \in \Sigma_2$, and every formula $\varphi' \in \Pi_2$. 
\end{theorem}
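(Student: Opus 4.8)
The plan is to reduce the Hoare-triple check to a satisfiability query in a decidable fragment, using the two main results already established. First I would recall that $\langle \varphi, \tau, \varphi' \rangle$ is a Hoare triple if and only if $\post_\tau(\semcro{\varphi}) \cap \semcro{\neg \varphi'} = \emptyset$, i.e.\ iff the formula $\varphi_{\post_\tau} \wedge \neg \varphi'$ is unsatisfiable, where $\varphi_{\post_\tau}$ is the $\cml$ formula characterizing $\post_\tau(\semcro{\varphi})$ constructed in the proof of Theorem~\ref{thm-reach}.

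Next I would account for the fragments. Since $S$ is a $\syst[\Sigma_2]$, every transition guard $\psi$ is in $\Sigma_2$; since $\varphi \in \Sigma_2$, Theorem~\ref{thm-reach} applies and yields $\varphi_{\post_\tau} \in \Sigma_2$ effectively. Since $\varphi' \in \Pi_2$, its negation $\neg \varphi'$ is in $\Sigma_2$ (negation swaps $\Sigma_n$ and $\Pi_n$, as noted in the section on the quantifier alternation hierarchy). Because $\Sigma_2$ is closed under conjunction, the formula $\varphi_{\post_\tau} \wedge \neg \varphi'$ is again a $\Sigma_2$ formula, and it is a closed formula (both conjuncts are closed).

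Finally I would invoke Theorem~\ref{thm-sat-dec}: the satisfiability problem for closed $\Sigma_2$ formulas of $\cml(\mathbb{C}^N,\Omega,\Xi)$ is decidable whenever $\fo(\mathbb{C},\Omega,\Xi)$ has a decidable satisfiability problem, which is our standing assumption on the fixed color logic. So we can decide whether $\varphi_{\post_\tau} \wedge \neg \varphi'$ is satisfiable, hence decide whether the triple holds. Since $S$ has finitely many transitions, this also gives decidability uniformly over all transitions $\tau$ of $S$.

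I do not expect a genuine obstacle here: the statement is essentially a corollary assembled from Theorem~\ref{thm-reach} (closure of $\Sigma_2$ under $\post$), the algebra of the quantifier-alternation fragments (negation of $\Pi_2$ is $\Sigma_2$, closure of $\Sigma_2$ under $\wedge$), and Theorem~\ref{thm-sat-dec} (decidable satisfiability of $\Sigma_2$). The only point requiring a word of care is making sure the characterization ``$\post_\tau(\semcro{\varphi}) \subseteq \semcro{\varphi'}$ iff $\varphi_{\post_\tau}\wedge\neg\varphi'$ unsatisfiable'' is stated correctly, i.e.\ that $\varphi_{\post_\tau}$ genuinely defines the set $\post_\tau(\semcro{\varphi})$ as guaranteed by Theorem~\ref{thm-reach}, and that $\neg\varphi'$ is put into an appropriate normal form (e.g.\ special/prenex form) before applying the decision procedure; these are routine.
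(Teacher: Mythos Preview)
Your proposal is correct and follows exactly the route the paper takes: it reduces the Hoare-triple check to unsatisfiability of $\varphi_{\post_\tau}\wedge\neg\varphi'$, uses Theorem~\ref{thm-reach} to get $\varphi_{\post_\tau}\in\Sigma_2$, the fragment algebra to get $\neg\varphi'\in\Sigma_2$ and the conjunction in $\Sigma_2$, and then invokes Theorem~\ref{thm-sat-dec}. The paper states this theorem as an immediate consequence of these two results without spelling out the details, so your write-up is simply a more explicit version of the same argument.
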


\subsection{Bounded reachability analysis}

An instance of the bounded reachability analysis problem is a triple $(Init, Target, k)$ where $Init$ and $Target$ are two sets of configurations, and $k$ is a positive integer. The problem consists in deciding whether there exists a computation of length at most $k$ which starts from some configuration in $Init$ and reaches  a configuration in $Target$. In other words, the problem consists in deciding whether  $Target \cap \bigcup_{0\leq i \leq k} \post_S^i(Init) \neq \emptyset$, or equivalently whether $Init \cap \bigcup_{0\leq i \leq k}  \pre_S^i(Target) \neq \emptyset$. The following result is a direct consequence of Theorem \ref{thm-reach} and Theorem \ref{thm-sat-dec}.
\begin{theorem}
If $S$ is a $\syst[\Sigma_2]$, then, for every $k \in \mathbb{N}$, and for every two formulas $\varphi_{I}, \varphi_T \in \Sigma_2$, the bounded reachability problem $(\semcro{\varphi_{I}}, \semcro{\varphi_T}, k)$ is decidable.
\end{theorem}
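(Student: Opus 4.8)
The plan is to reduce the bounded reachability problem to a single satisfiability check in the fragment $\Sigma_2$, and then invoke Theorem~\ref{thm-sat-dec}. First I would observe that $Target \cap \bigcup_{0\leq i \leq k} \post_S^i(Init) \neq \emptyset$ holds if and only if there exists some $i$ with $0 \leq i \leq k$ such that $\post_S^i(\semcro{\varphi_I}) \cap \semcro{\varphi_T} \neq \emptyset$; since $k$ is fixed and finite, it suffices to decide this for each $i$ separately and take the disjunction of the answers.

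Next, I would use Theorem~\ref{thm-reach} iteratively. Starting from $\varphi_I \in \Sigma_2$, a single application of $\post_S$ (ranging over all transitions of $S$, taking the disjunction over transitions) yields a formula in $\Sigma_2$ again, by the closure statement of Theorem~\ref{thm-reach} for $\syst[\Sigma_2]$. Applying this $i$ times, I obtain an effectively constructible formula $\varphi_I^{(i)} \in \Sigma_2$ with $\semcro{\varphi_I^{(i)}} = \post_S^i(\semcro{\varphi_I})$. Then $\post_S^i(\semcro{\varphi_I}) \cap \semcro{\varphi_T} \neq \emptyset$ is equivalent to the satisfiability of $\varphi_I^{(i)} \wedge \varphi_T$. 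Since $\Sigma_2$ is closed under conjunction (noted in the subsection on the quantifier alternation hierarchy) and both conjuncts are in $\Sigma_2$, the formula $\varphi_I^{(i)} \wedge \varphi_T$ is in $\Sigma_2$, hence its satisfiability is decidable by Theorem~\ref{thm-sat-dec} (the color logic $\fo(\mathbb{C},\Omega,\Xi)$ being assumed decidable for the whole section). Doing this for $i = 0, 1, \ldots, k$ and answering ``yes'' iff some instance is satisfiable gives the decision procedure. Alternatively, one can form the single $\Sigma_2$ formula $\bigvee_{i=0}^k (\varphi_I^{(i)} \wedge \varphi_T)$ and make one satisfiability call; I would probably present it this way for conciseness.

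There is no real obstacle here: the theorem is essentially a corollary packaging the machinery already established. The only points that need a sentence of care are (1) that each $\post_S$ step must range over the finitely many transitions of $S$, so the ``one transition'' construction of Theorem~\ref{thm-reach} is combined by disjunction — and $\Sigma_2$ is closed under finite disjunction — and (2) that the iteration terminates trivially because $k$ is a fixed input, so only finitely many formulas are built, each of effectively bounded (though exponentially growing) size. I would also remark, in passing, that the symmetric formulation via $\pre_S$ and $\varphi_T$ works identically, which matches the phrasing in the problem statement. The whole argument is a short deduction, so I would keep the proof to a few lines.

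\begin{proof}
By definition, $(\semcro{\varphi_I}, \semcro{\varphi_T}, k)$ is a positive instance iff there is some $i$ with $0 \leq i \leq k$ such that $\post_S^i(\semcro{\varphi_I}) \cap \semcro{\varphi_T} \neq \emptyset$. Since $S$ is a $\syst[\Sigma_2]$ and $\varphi_I \in \Sigma_2$, applying Theorem~\ref{thm-reach} repeatedly (taking at each step the disjunction over the finitely many transitions of $S$, which stays in $\Sigma_2$ by closure of $\Sigma_2$ under disjunction) yields, for each $i \in \{0,\ldots,k\}$, an effectively constructible formula $\varphi_I^{(i)} \in \Sigma_2$ with $\semcro{\varphi_I^{(i)}} = \post_S^i(\semcro{\varphi_I})$. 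Then the instance is positive iff the $\Sigma_2$ formula
$$
\bigvee_{i=0}^{k} \big( \varphi_I^{(i)} \wedge \varphi_T \big)
$$
is satisfiable (using closure of $\Sigma_2$ under finite conjunction and disjunction). By Theorem~\ref{thm-sat-dec}, satisfiability of this formula is decidable, since the color logic $\fo(\mathbb{C},\Omega,\Xi)$ is assumed to have a decidable satisfiability problem. Hence the bounded reachability problem is decidable.
\end{proof}
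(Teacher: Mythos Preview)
Your proposal is correct and matches the paper's own treatment, which simply states that the theorem is a direct consequence of Theorem~\ref{thm-reach} and Theorem~\ref{thm-sat-dec} without spelling out the details. You have filled in exactly the intended argument: iterate the $\Sigma_2$-preserving $\post$ computation, use closure of $\Sigma_2$ under finite conjunction and disjunction, and invoke the decidability of $\Sigma_2$ satisfiability.
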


\subsection{Checking invariance properties}
 
Invariance checking consists in deciding whether a given property 
(1) is satisfied by the set of initial configurations, and
(2) is stable under the transition relation of a system.

Formally, given a $\syst$ $S$ with transitions in $\Delta$ and a closed formula $\varphi_{init}$ defining the set of initial configurations,
we say that a closed formula $\varphi$ is an \emph{inductive invariant} of $(\Delta,\varphi_{init})$ if and only if 
(1) $\semcro{\varphi_{init}}\subseteq\semcro{\varphi}$, and
(2) $\post_{\tau}(\semcro{\varphi})\subseteq\semcro{\varphi}$ for any $\tau\in\Delta$.
Clearly, 
(1) is equivalent to $\semcro{\varphi_{init}}\cap\semcro{\lnot\varphi} =\emptyset$, and
(2) is equivalent to $\post_{\tau}(\semcro{\varphi})\cap\semcro{\lnot\varphi} =\emptyset$.
%
%
By Theorem \ref{thm-reach} and Theorem \ref{thm-sat-dec}, we have:
\begin{theorem}
The problem whether a formula $\varphi\in B(\Sigma_1)$ is an inductive invariant of $(\Delta,\varphi_{init})$, where $\Delta\in\syst[\Sigma_2]$ and $\varphi_{init}\in\Sigma_2$ is decidable.
\end{theorem}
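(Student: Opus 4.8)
The plan is to reduce the question to a finite number of satisfiability checks for $\Sigma_2$ formulas, each of which is decidable by Theorem~\ref{thm-sat-dec}. Recall that, by definition, $\varphi$ is an inductive invariant of $(\Delta,\varphi_{init})$ iff (1) $\semcro{\varphi_{init}}\cap\semcro{\lnot\varphi}=\emptyset$ and (2) $\post_{\tau}(\semcro{\varphi})\cap\semcro{\lnot\varphi}=\emptyset$ for every $\tau\in\Delta$. Since $\Delta$ is finite, it suffices to show that each of these finitely many emptiness conditions can be tested effectively.

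The one point that needs care — and the reason the statement is phrased for $B(\Sigma_1)$ rather than for arbitrary $\Sigma_2$ invariants — is the interplay of fragments. I would first record that $B(\Sigma_1)$ is closed under negation: the negation of a $\Sigma_1$ formula is a $\Pi_1$ formula, so the negation of a boolean combination of $\Sigma_1$ formulas is again such a combination. Moreover $B(\Sigma_1)\subseteq\Sigma_2\cap\Pi_2$. Hence, for $\varphi\in B(\Sigma_1)$, we have \emph{simultaneously} $\varphi\in\Sigma_2$ and $\lnot\varphi\in\Sigma_2$. (For a general $\Sigma_2$ formula $\varphi$ this fails: $\lnot\varphi\in\Pi_2$, whose satisfiability is undecidable by Theorem~\ref{thm-sat-undec}, and conjoining it with a $\Sigma_2$ post-image would escape the decidable fragment — so this double membership is exactly what makes the argument work.)

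For condition (1): by hypothesis $\varphi_{init}\in\Sigma_2$, and $\lnot\varphi\in\Sigma_2$ by the remark above; since $\Sigma_2$ is closed under conjunction, $\varphi_{init}\wedge\lnot\varphi\in\Sigma_2$, and Theorem~\ref{thm-sat-dec} decides its satisfiability, i.e. whether $\semcro{\varphi_{init}}\cap\semcro{\lnot\varphi}=\emptyset$. For condition (2): fix $\tau\in\Delta$. Since $S\in\syst[\Sigma_2]$ and $\varphi\in B(\Sigma_1)\subseteq\Sigma_2$, Theorem~\ref{thm-reach} provides an effectively computable formula $\varphi_{\post_\tau}\in\Sigma_2$ with $\semcro{\varphi_{\post_\tau}}=\post_{\tau}(\semcro{\varphi})$. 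Again using closure of $\Sigma_2$ under conjunction and $\lnot\varphi\in\Sigma_2$, the formula $\varphi_{\post_\tau}\wedge\lnot\varphi$ lies in $\Sigma_2$, so by Theorem~\ref{thm-sat-dec} we can decide whether it is satisfiable, i.e. whether $\post_{\tau}(\semcro{\varphi})\cap\semcro{\lnot\varphi}=\emptyset$.

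Putting this together: compute $\varphi_{\post_\tau}$ for each $\tau\in\Delta$, form the $\Sigma_2$ formulas $\varphi_{init}\wedge\lnot\varphi$ and $\varphi_{\post_\tau}\wedge\lnot\varphi$, and run the decision procedure of Theorem~\ref{thm-sat-dec} on each. The formula $\varphi$ is an inductive invariant of $(\Delta,\varphi_{init})$ iff all of these $\Sigma_2$ formulas are unsatisfiable, which gives the decidability claim. There is no substantial obstacle beyond the fragment bookkeeping above; the whole statement is essentially a corollary of Theorems~\ref{thm-reach} and~\ref{thm-sat-dec} once one observes that $B(\Sigma_1)$ and its negation both sit inside $\Sigma_2$.
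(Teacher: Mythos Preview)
Your proof is correct and follows exactly the approach the paper intends: the paper simply states the theorem as a direct consequence of Theorems~\ref{thm-reach} and~\ref{thm-sat-dec}, and you have correctly spelled out the fragment bookkeeping (namely that $B(\Sigma_1)$ is closed under negation and sits inside $\Sigma_2$, so all the relevant conjunctions remain $\Sigma_2$) that makes this invocation go through.
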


The deductive approach for establishing an invariance property considers the 
\emph{inductive invariance checking problem} given by a triple $(\varphi_{init}, \varphi_{inv}, \varphi_{aux})$ of closed formulas expressing sets of configurations, and which consists in deciding whether 
(1) $\semcro{\varphi_{init}} \incl \semcro{\varphi_{aux}}$, 
(2) $\semcro{\varphi_{aux}} \incl \semcro{\varphi_{inv}}$, and 
(3) $\varphi_{aux}$ is an inductive invariant. 
The following result is a direct consequence of Theorem \ref{thm-reach}, Theorem \ref{thm-sat-dec}, and of the previous theorem.
\begin{theorem}\label{thm-ci}
If $S$ is a $\syst[\Sigma_2]$, then the inductive invariance checking problem is decidable for every instance $(\varphi_{init}, \varphi_{inv}, \varphi_{aux})$ where $\varphi_{init} \in \Sigma_2$, and $\varphi_{inv}, \varphi_{aux} \in B(\Sigma_1)$ are all closed formulas. 
\end{theorem}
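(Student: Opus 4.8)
The plan is to reduce the inductive invariance checking problem, for an instance $(\varphi_{init}, \varphi_{inv}, \varphi_{aux})$, to a finite number of (un)satisfiability queries over $\Sigma_2$ formulas, which are then discharged by Theorem~\ref{thm-sat-dec}. First I would observe that condition~(3), stating that $\varphi_{aux}$ is an inductive invariant of $(\Delta, \varphi_{init})$, already subsumes condition~(1) (namely $\semcro{\varphi_{init}} \incl \semcro{\varphi_{aux}}$), so the task reduces to deciding~(2), i.e.\ $\semcro{\varphi_{aux}} \incl \semcro{\varphi_{inv}}$, and~(3).

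For~(3), since $\varphi_{aux} \in B(\Sigma_1)$, $\varphi_{init} \in \Sigma_2$, and $\Delta \in \syst[\Sigma_2]$, this is exactly the setting of the preceding theorem, so decidability is immediate. Unfolding that argument for completeness, (3)~amounts to the unsatisfiability of $\varphi_{init} \land \lnot \varphi_{aux}$ together with, for each transition $\tau \in \Delta$, the emptiness of $\post_\tau(\semcro{\varphi_{aux}}) \cap \semcro{\lnot \varphi_{aux}}$; here $B(\Sigma_1) \incl \Sigma_2$, so $\lnot\varphi_{aux} \in \Sigma_2$, Theorem~\ref{thm-reach} yields a $\Sigma_2$ formula defining $\post_\tau(\semcro{\varphi_{aux}})$, and closure of $\Sigma_2$ under conjunction turns each emptiness check into the satisfiability of a $\Sigma_2$ formula, decided by Theorem~\ref{thm-sat-dec}.

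For~(2), I would rewrite $\semcro{\varphi_{aux}} \incl \semcro{\varphi_{inv}}$ as the unsatisfiability of $\varphi_{aux} \land \lnot\varphi_{inv}$. Both $\varphi_{aux}$ and $\lnot\varphi_{inv}$ lie in $B(\Sigma_1) \incl \Sigma_2$ (using that $B(\Sigma_1)$ is closed under negation), so their conjunction is a $\Sigma_2$ formula and Theorem~\ref{thm-sat-dec} applies. Putting the three checks together gives the decision procedure.

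There is no real obstacle here: the statement is essentially a corollary obtained by chaining Theorems~\ref{thm-reach} and~\ref{thm-sat-dec} with the preceding theorem. The only point deserving care is the bookkeeping over the fragment hierarchy — one must invoke that $B(\Sigma_1)$ is closed under Boolean operations and sits inside $\Sigma_2$, that $\Sigma_2$ is closed under finite conjunction, and that a \emph{single} application of $\post_\tau$ maps a $\Sigma_2$-definable (in particular $B(\Sigma_1)$-definable) set to a $\Sigma_2$-definable one; since inductive invariance checking involves only one transition step, the fact that $\post_\tau$ need not preserve $B(\Sigma_1)$ itself causes no difficulty.
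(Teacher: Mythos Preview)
Your proposal is correct and follows essentially the same route as the paper, which simply states that the result is a direct consequence of Theorem~\ref{thm-reach}, Theorem~\ref{thm-sat-dec}, and the preceding theorem. Your additional observation that condition~(1) is already contained in condition~(3) is accurate and streamlines the argument slightly, but otherwise the decomposition and the fragment bookkeeping match the paper's (implicit) reasoning.
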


Of course, the difficult part in applying the deductive approach is to find useful auxiliary inductive invariants. One approach to tackle this problem is to try to compute the largest inductive invariant included in $\varphi_{inv}$ which is the set  $\bigcap_{k \geq 0} \widetilde{\pre}^k_S (\varphi_{inv})$. 
Therefore, a method to derive auxiliary inductive invariants is to try iteratively the sets $\varphi_{inv}$, $\varphi_{inv} \cap \widetilde{\pre}_S (\varphi_{inv})$, $\varphi_{inv} \cap \widetilde{\pre}_S (\varphi_{inv}) \cap  \widetilde{\pre}^2_S (\varphi_{inv})$, etc. In many practical cases, only few strengthening steps are needed to find an inductive invariant. (Indeed, the user is able in general to provide accurate invariant assertions for each control point of his system.)
 The result below implies that the steps of this iterative strengthening method can be automatized when $\syst[\Sigma_1]$ models  and $\Pi_1$ invariants are considered. 
%
\begin{theorem}
If $S$ is a $\syst[\Sigma_1]$, then for every closed formula $\varphi$ in $\Pi_1$ and every positive integer $k$, it is possible to construct a formula in $\Pi_1$ defining the set 
$\bigcap_{0\leq i \leq k} \widetilde{\pre}^i_S (\semcro{\varphi})$. 
\end{theorem}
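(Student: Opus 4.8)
The plan is to reduce the statement to Theorem~\ref{thm-reach} (the $\pre$ part, instantiated at $n=1$), together with the Boolean closure properties of the fragments $\Sigma_1$ and $\Pi_1$ recorded in Section~\ref{sect-cml}, and a short induction on $k$. Two elementary set-algebra facts about $\widetilde{\pre}_S$ drive the argument. First, since $\pre_S = \bigcup_{\tau\in\Delta}\pre_\tau$ distributes over unions of sets of markings, $\widetilde{\pre}_S(\mathcal{M}) = \overline{\pre_S(\overline{\mathcal{M}})} = \bigcap_{\tau\in\Delta}\overline{\pre_\tau(\overline{\mathcal{M}})}$ distributes over intersections, so $\widetilde{\pre}_S(\bigcap_i\mathcal{M}_i) = \bigcap_i\widetilde{\pre}_S(\mathcal{M}_i)$; together with $\widetilde{\pre}^0_S = \mathrm{id}$ this gives the recurrence $\bigcap_{0\le i\le k+1}\widetilde{\pre}^i_S(\semcro{\varphi}) = \semcro{\varphi}\cap\widetilde{\pre}_S\big(\bigcap_{0\le i\le k}\widetilde{\pre}^i_S(\semcro{\varphi})\big)$. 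Second, $\widetilde{\pre}_S$ preserves $\Pi_1$-definability: if $\semcro{\psi}$ is defined by $\psi\in\Pi_1$, then $\neg\psi$ is equivalent to a $\Sigma_1$ formula, so by Theorem~\ref{thm-reach} each $\pre_\tau(\semcro{\neg\psi})$ is $\Sigma_1$-definable, hence so is $\pre_S(\semcro{\neg\psi}) = \bigcup_{\tau\in\Delta}\pre_\tau(\semcro{\neg\psi})$ because $\Sigma_1$ is closed under disjunction; complementing once more, $\widetilde{\pre}_S(\semcro{\psi}) = \overline{\pre_S(\semcro{\neg\psi})}$ is $\Pi_1$-definable, and effectively so.

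With these two facts the theorem follows by induction on $k$. For $k=0$ the set is $\semcro{\varphi}$, defined by the $\Pi_1$ formula $\varphi$ itself. For the induction step, assume $\bigcap_{0\le i\le k}\widetilde{\pre}^i_S(\semcro{\varphi})$ is defined by some $\psi_k\in\Pi_1$; by the second fact $\widetilde{\pre}_S(\semcro{\psi_k})$ is defined by some $\psi'\in\Pi_1$, and by the recurrence $\bigcap_{0\le i\le k+1}\widetilde{\pre}^i_S(\semcro{\varphi}) = \semcro{\varphi}\cap\semcro{\psi'} = \semcro{\varphi\wedge\psi'}$, a conjunction of two $\Pi_1$ formulas, hence definable in $\Pi_1$ (closure of $\Pi_1$ under conjunction). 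Every transformation used — the $\pre_\tau$ construction of Theorem~\ref{thm-reach}, putting formulas in prenex form, negation, and taking conjunctions — is effective, so the final $\Pi_1$ formula is constructible.

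There is no genuinely difficult step here; the construction is bookkeeping layered on top of Theorem~\ref{thm-reach}. The only points requiring care are conceptual rather than technical: it is $\widetilde{\pre}_S$, not $\pre_S$, that preserves the $\Pi_1$ fragment (which is precisely why the statement is phrased with $\widetilde{\pre}$ and with $\Pi_1$ rather than $\Sigma_1$), and the hypothesis that $S\in\syst[\Sigma_1]$ rather than merely $\syst[\Sigma_2]$ is what prevents quantifier alternation from accumulating as $k$ grows — at each iteration we stay within $\Sigma_1/\Pi_1$ because Theorem~\ref{thm-reach} keeps $\pre_\tau$ images of $\Sigma_1$ formulas in $\Sigma_1$ when the guards are in $\Sigma_1$.
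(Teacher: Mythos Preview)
Your proposal is correct and follows essentially the same approach as the paper: the paper's argument is a single sentence observing that, by Theorem~\ref{thm-reach} applied at $n=1$, $\widetilde{\pre}_S$ maps $\Pi_1$-definable sets to $\Pi_1$-definable sets (complement, take $\pre$, complement again), and the rest is iteration and closure of $\Pi_1$ under conjunction. You have spelled out the details --- the distributivity of $\widetilde{\pre}_S$ over intersections, the explicit recurrence, and the induction on $k$ --- that the paper leaves implicit, but there is no difference in strategy.
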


The theorem above is a consequence of the fact that, by Theorem \ref{thm-reach}, 
for every $S$ in $\syst[\Sigma_1]$ and for every formula $\varphi$  in  $\Pi_1$, it is possible to construct a formula $\varphi_{\widetilde{\pre}}$ also in $\Pi_1$ such that
$\semcro{\varphi_{\widetilde{\pre}}}= \widetilde{\pre}_S (\semcro{\varphi})$.

\paragraph{\bf Complexity:}
Let $\tau = \vv{p} \longhook \vv{q} : \psi$ be a transition of a system $S \in \syst[\Sigma_2]$, 
and let $\varphi$ be a $B(\Sigma_1)$ formula.
The satisfiability of $\post_{\tau}(\varphi)\land\lnot\varphi$ can be reduced in nondeterministic doubly-exponential time to the satisfiability problem of the color logic. This is due to the fact that 
(1) the reduction to the satisfiability problem of the color logic is
 in nondeterministic exponential time w.r.t. the maximal number of universally quantified variables in the formulas $\lnot \varphi$ and $\post_\tau(\varphi)$, and that
(2) the number of universally quantified variables in $\post_\tau(\varphi)$ is exponential in the number of universally quantified variables in $\varphi \land \psi$.

Now, for fixed sizes of $\vv{p}$ and $\vv{q}$, and for a fixed number of the quantified variables in $\varphi \land \psi$, the reduction to the satisfiability problem of the color logic is in NP.
Such assumptions are in fact quite realistic in practice (as shown in the following section for different examples of parameterized systems).
Indeed, in models of parametrized systems (see Section \ref{sect-modeling}), communication involves only few processes (usually at most two). This justifies the bound on the sizes of left and right hand sides of the transition rules. Moreover, invariants are usually expressible using a small number of process indices (for instance mutual exclusion needs two indices) and relates only few of their local variables. 

\section{Case Studies and Experimental Results}

We illustrate the use of our framework on several examples of parameterized systems.
First, we consider the parameterized version of the Reader-Writer lock example provided in~\cite{Flanagan-Freund-Qadeer-02}. We give for this case study the inductive invariant allowing to prove a suitable safety property, and we show significant parts of its proof. 

Then, we describe briefly a prototype tool for checking invariance properties based on our framework, and we give the experimental results obtained on several examples of parameterized mutual exclusion protocols and on the Reader-Writer lock case study.

\subsection{Verification of the Reader-Writer Lock}
  \label{ex:RW-INV}
A safety property of our example is 
  ``for all \lstinline{Reader} processes at control location 3, the local variable \lstinline{y} has the same value, equal to \lstinline{f(x)}'', 
whose specification in $\cml$ is the following $\Pi_1$ formula:
$$
RF = \forall a\in r3, t\in x\sep \delta_2(a)=f(\delta_1(t))
$$
Of course, this property is true only if all \lstinline{Reader} and \lstinline{Writer} processes respect the procedure of acquiring the lock, i.e., there are no other processes in the system which are accessing the global variable \lstinline{x}. 
Therefore, a correct initial configuration of the $\syst$ model given on Table~\ref{tab:RW} has no token in places $r2$, $r3$, $w2$, and $w3$, and only one token in place $x$.
Moreover, all process identities stored in color $\delta_1$ are positive.
We suppose that the lock is free initially, i.e., the place $r$ is empty and the place $w$ contains a unique token with negative $\delta_2$ color.
Then, a correct initial configuration of the system is given by the following $\mathit{Init}$ formula in $B(\Sigma_1)$:
$$
\mathit{Init} = G_x\land \mathit{Ids}\land \mathit{Init}_{lock} \land \Big(\forall t\sep \lnot \big(r2(t)\lor r3(t) \lor w2(t) \lor w3(t)\big)\Big)
$$ 
where 
\[
  G_x  =  (\exists t\in x\sep true) \land (\forall t,t'\in x\sep t=t') 
\]
expresses that the place $x$ contains a unique token,
\[
  \mathit{Ids}  =  \forall t \sep \delta_1(t) \ge 0
\]
expresses that all tokens have a positive color $\delta_1$ (representing their identity), and
\[
  \mathit{Init}_{lock}  =  (\exists u\in w\sep \delta_2(u)<0) \land (\forall u,u'\in w\sep u=u') \land (\forall t\in r \sep \mathit{false})
\]
specifies the initial state of the lock: there is only one token in place $w$ and its color $\delta_2$ is negative, and the place $r$ is empty.

\smallskip
The premises of Theorem~\ref{thm-ci} are fulfilled since the model proposed on Table~\ref{tab:RW} is in $\syst[\Pi_1]$, and $\mathit{Init}$ and $\mathit{RF}$ are both in $B(\Sigma_1)$.
It follows that we have to find an inductive invariant $\varphi_{aux}\in B(\Sigma_1)$ such that $\mathit{Init} \limp \varphi_{aux}$ and $\varphi_{aux}\limp \mathit{RF}$.
We consider the following $B(\Sigma_1)$ formula as candidate for $\varphi_{aux}$:
$$
\mathit{Aux} = G_x \land \mathit{Ids} \land RW_w \land RW_r \land RF
$$
where $G_x$, $\mathit{Ids}$ and $RF$ are defined above and
\[
  RW_w  =  (\exists u\in w\sep true) \land 
             (\forall u,u'\in w\sep u=u') \land 
             ((\exists t\sep w2(t)\lor w3(t)) \Lra (\exists u\in w\sep\delta_2(u)\ge 0))
\]
specifies that the place $w$ contains only one token which color $\delta_2$ is positive when a writer process is accessing the global variable (because $\delta_2$ stores the identity of the writer), and
\[
  RW_r  =  (\exists v\sep r2(v)\lor r3(v)) \Lra (\exists \ell\in r\sep true) 
\]
expresses that the place $r$ must contain a token when a reader process is accessing the global variable (i.e., it is at locations $r2$ or $r3$).

\medskip
Therefore, to check the safety property $RF$ we have to show that:
(1) $\mathit{Init} \limp \mathit{Aux}$,
(2) for any transition $\tau$ in the system, $\post_{\tau}(Aux)\limp \mathit{Aux}$, and
(3) $Aux\limp RF$.
We let the point (1) as an exercise.
The point (3) follows trivially from the definition of $Aux$. 
In the following, we detail the proof of the point (2) for one transition of the system, namely $w_1$, that we recall hereafter for readability:
\begin{cmrs}
w_1: & \cmrsrule{w1,w}{w2,w}{%
   \begin{array}[t]{l}
     \lnot(\exists z\in r \sep true) \land \delta_2(x_2)< 0 \land \delta_2(y_2)=\delta_1(x_1) \land  \\
     \delta_1(y_2)=\delta_1(x_2) \land \varphi_{id}(1)
  \end{array}} \\
\end{cmrs}

Using equation~\ref{eq-post}, we obtain that the $\post$-image of $Aux$ by the transition $w_1$ has the following form:
\begin{eqnarray*}
\lefteqn{Aux_{\post_{w_1}}}\\ &\! =\! & 
\exists y_1\in w2\sep\exists y_2\in w\sep
\exists c_{1,x_1},c_{2,x_1},c_{1,x_2},c_{2,x_2}\sep  \\
& & \phantom{\exists y_1}
	 \big(\!\begin{array}[t]{l}
	  (Aux \land \lnot(\exists z\in r \sep true) ~\land \\
	  ~\delta_2(x_2)< 0 \land \delta_2(y_2)=\delta_1(x_1) \land  \delta_1(y_2)=\delta_1(x_2) \land \varphi_{id}(1) \\
          ~~)\ominus (\vv{x}, \mathtt{loc}_{\vv{x}},\mathtt{col}_{\vv{x}})
	 \end{array} \\
& & \phantom{\exists y_1}\big)\oplus (\vv{y}, \mathtt{loc}_{\vv{y}}) 
\end{eqnarray*}
where 
$\vv{x}=(x_1,x_2)$,
$\mathtt{loc}_{\vv{x}}=[x_1\mapsto w1,x_2\mapsto w]$,
$\mathtt{col}_{\vv{x}}=[x_i\mapsto k \mapsto c_{k,x_i}]_{1\le i\le 2, 1\le k\le 2}$,
$\vv{y}=(y_1,y_2)$, and
$\mathtt{loc}_{\vv{y}}=[y_1\mapsto w2, y_2\mapsto w]$.

\smallskip
Before applying operators $\ominus$ and $\oplus$, let us observe that $Aux$'s closed sub-formulas $G_x$, $RW_r$, $RF$, and $\lnot(\exists z\in r \sep true)$ concern places which are not involved in the transition $w_1$. 
It can be shown that (and Example~\ref{ex:post} gives an illustration of this fact)  these sub-formulas are not changed by the application of $\ominus$ and $\oplus$ operators. 
Therefore, we have to apply these operators only on the rest of  sub-formulas of $Aux$ and on the guard of $w_1$, i.e.:
\begin{eqnarray*}
\lefteqn{Aux_{\post_{w_1}}}\\ &\!=\!& G_x\land RW_r \land RF \land \lnot(\exists z\in r \sep true) \land \\
& & \exists y_1\in w2\sep\exists y_2\in w\sep
    \exists c_{1,x_1},c_{2,x_1},c_{1,x_2},c_{2,x_2}\sep \\
& & \phantom{\exists y_1}
	 \big(\!\begin{array}[t]{l}
	  (\mathit{Ids} \land RW_w  \land \\
	  ~\delta_2(x_2)< 0 \land \delta_2(y_2)=\delta_1(x_1) \land  \delta_1(y_2)=\delta_1(x_2) \land \varphi_{id}(1) \\
          ~~)\ominus (\vv{x}, \mathtt{loc}_{\vv{x}},\mathtt{col}_{\vv{x}})
	 \end{array} \\
& & \phantom{\exists y_1}\big)\oplus (\vv{y}, \mathtt{loc}_{\vv{y}}) 
\end{eqnarray*}
By distributing the $\ominus$ operator over $\land$ (rules $\ominus_4$ and $\ominus_5$), and by applying three times the rule $\ominus_2$, we obtain:
\begin{eqnarray*}
\lefteqn{Aux_{\post_{w_1}}}\\ &\! =\! & G_x\land RW_r \land RF \land \lnot(\exists z\in r \sep true) \land \\
& & \exists y_1\in w2\sep\exists y_2\in w\sep
    \exists c_{1,x_1},c_{2,x_1},c_{1,x_2},c_{2,x_2}\sep \\
& & \phantom{\exists y_1}
	 \big(\!\begin{array}[t]{l}
	  (\mathit{Ids}\ominus (\vv{x}, \mathtt{loc}_{\vv{x}},\mathtt{col}_{\vv{x}}) \land 
           RW_w\ominus (\vv{x}, \mathtt{loc}_{\vv{x}},\mathtt{col}_{\vv{x}}) \land \\
	  c_{2,x_2} < 0 \land \delta_2(y_2) = c_{1,x_1}\land \delta_1(y_2)=c_{1,x_2}\land 
          \varphi_{id}(1)\ominus (\vv{x}, \mathtt{loc}_{\vv{x}},\mathtt{col}_{\vv{x}})
	  \end{array} \\
& & \phantom{\exists y_1}
  \big)\oplus (\vv{y}, \mathtt{loc}_{\vv{y}})  
\end{eqnarray*}
The application of $\ominus$ on the $\mathit{Ids}$ sub-formula uses the rules $\ominus_4$ and $\ominus_6$ and has as effect the introduction of constraints on the $c_{1,x_1}$ and $c_{2,x_1}$ color variables:
\begin{eqnarray*}
  \mathit{Ids}\ominus (\vv{x}, \mathtt{loc}_{\vv{x}},\mathtt{col}_{\vv{x}}) & = & (\forall t\sep \delta_1(t) \ge 0)\ominus (\vv{x}, \mathtt{loc}_{\vv{x}},\mathtt{col}_{\vv{x}}) \\
  & = & \mathit{Ids} \land c_{1,x_1} \ge 0 \land c_{1,x_2}\ge 0 
\end{eqnarray*}
The result of applying  $\ominus$ on the $RW_w$ sub-formula is (sometimes we omit the arguments of $\ominus$ for legibility):
\begin{eqnarray*}
  \lefteqn{RW_w\ominus (\vv{x}, \mathtt{loc}_{\vv{x}},\mathtt{col}_{\vv{x}}) }\\
 & = & 
  \big(\begin{array}[t]{l}
    (\exists u\in w\sep true) \land \\
    (\forall u,u'\in w\sep u=u') \land  \\
    ((\exists t\sep w2(t)\lor w3(t)) \Lra (\exists u\in w\sep\delta_2(u)\ge 0)) \big)\ominus (\vv{x}, \mathtt{loc}_{\vv{x}},\mathtt{col}_{\vv{x}})
  \end{array}
    \\
& = & (\exists u\in w\sep true)\ominus (\vv{x}, \mathtt{loc}_{\vv{x}},\mathtt{col}_{\vv{x}}) \land \\ 
&   & (\forall u,u'\in w\sep u=u')\ominus (\vv{x}, \mathtt{loc}_{\vv{x}},\mathtt{col}_{\vv{x}}) \land  \\
&   & ((\exists t\sep w2(t)\lor w3(t)) \Lra (\exists u\in w\sep\delta_2(u)\ge 0))\ominus (\vv{x}, \mathtt{loc}_{\vv{x}},\mathtt{col}_{\vv{x}}) 
  \\
& = & ((\exists u\in w\sep true) \lor true) \land \\
&   & (\forall u\in w\sep (\forall u'\in w\sep (u=u')\ominus) \land (u=x_2)\ominus) \land 
                     (\forall u'\in w\sep (x_2=u')\ominus) \land (x_2=x_2)\ominus \land \\
&   & ((\exists t\sep w2(t)\lor w3(t)) \Lra (\exists u\in w\sep\delta_2(u)\ge 0 \lor c_{2,x_2} \ge 0))
  \\
& = & true \land \\
&   & (\forall u\in w\sep (\forall u'\in w\sep u=u') \land \mathit{false}) \land (\forall u'\in w\sep \mathit{false}) \land true \land \\
&   & ((\exists t\sep w2(t)\lor w3(t)) \Lra (\exists u\in w\sep\delta_2(u)\ge 0 \lor c_{2,x_2} \ge 0))
\end{eqnarray*}
After some trivial simplification, we obtain: 
\begin{eqnarray*}
  RW_w\ominus (\vv{x}, \mathtt{loc}_{\vv{x}},\mathtt{col}_{\vv{x}}) 
& = & (\forall u\in w\sep \mathit{false}) \land \\
&   & ((\exists t\sep w2(t)\lor w3(t)) \Lra (\exists u\in w\sep\delta_2(u)\ge 0 \lor c_{2,x_2} \ge 0))
\end{eqnarray*}
As expected, the first conjunct of the result obtained above says that after the deletion of the $x_2$ token in $w$, there is no more token in $w$.

The result of applying $\ominus$ on the $\varphi_{id}(1)$ sub-formula is:
\begin{eqnarray*}
 \varphi_{id}(1)\ominus (\vv{x}, \mathtt{loc}_{\vv{x}},\mathtt{col}_{\vv{x}}) & = &
 (\delta_1(x_1)=\delta_1(y_1) \land \delta_2(x_1)=\delta_2(y_1))\ominus (\vv{x}, \mathtt{loc}_{\vv{x}},\mathtt{col}_{\vv{x}}) \\
 & = & \delta_1(y_1) = c_{1,x_1}\land \delta_2(y_1) = c_{2,x_1} 
\end{eqnarray*}
          

\noindent Therefore, after applying the $\ominus$ operator we obtain:
\begin{eqnarray*}
\lefteqn{Aux_{\post_{w_1}}}\\ &\! =\! & G_x\land RW_r \land RF \land \lnot(\exists z\in r \sep true) \land \\
& & \exists y_1\in w2\sep\exists y_2\in w\sep
    \exists c_{1,x_1},c_{2,x_1},c_{1,x_2},c_{2,x_2}\sep \\
& & \phantom{\exists y_1}
	 \big(\!\begin{array}[t]{l}
	  (\mathit{Ids} \land c_{1,x_1} \ge 0 \land c_{1,x_2}\ge 0 \land 
           (\forall u\in w\sep \mathit{false}) \land \\
           ((\exists t\sep w2(t)\lor w3(t)) \Lra (\exists u\in w\sep\delta_2(u)\ge 0 \lor c_{2,x_2} \ge 0)) \land \\
	  c_{2,x_2} < 0 \land \delta_2(y_2) = c_{1,x_1}\land \delta_1(y_2)=c_{1,x_2}\land 
          \delta_1(y_1) = c_{1,x_1}\land \delta_2(y_1) = c_{2,x_1} 
	  \end{array} \\
& & \phantom{\exists y_1}
  \big)\oplus (\vv{y}, \mathtt{loc}_{\vv{y}})  
\end{eqnarray*}

The $\oplus$ operation transforms all sub-formulas containing quantifiers.
Indeed, after distributing $\oplus$ over conjunctions (rules $\oplus_4$ and $\oplus_5$) and after applying several times rules $\oplus_2$ and $\oplus_6$, we obtain:
\begin{eqnarray*}
\lefteqn{Aux_{\post_{w_1}}}\\ &\! =\! & G_x\land RW_r \land RF \land \lnot(\exists z\in r \sep true) \land \\
&&\exists y_1\in w2\sep\exists y_2\in w\sep
  \exists c_{1,x_1},c_{2,x_1},c_{1,x_2},c_{2,x_2}\sep \\
&&\phantom{\exists y_1}
	\big(\!
	\begin{array}[t]{l}
	  (\forall t\sep \delta_1(t) \ge 0 \lor (t=y_1) \lor (t=y_2)) \land c_{1,x_1} \ge 0 \land c_{1,x_2}\ge 0 \land \\ 
	  (\forall u\in w\sep \mathit{false} \lor u=y_2) \land \\
	  \big((\exists t\sep (w2(t)\lor w3(t))\land (t\neq y_1))\!\Lra\! ((\exists u\in w\sep\delta_2(u)\ge 0 \land (u\neq y_2)) \lor c_{2,x_2} \ge 0)\big) \land  \\
	  c_{2,x_2} < 0 \land \delta_2(y_2) = c_{1,x_1}\land \delta_1(y_2)=c_{1,x_2}\land 
          \delta_1(y_1) = c_{1,x_1}\land \delta_2(y_1) = c_{2,x_1} 
	  \end{array} \\
& & \phantom{\exists y_1}\big)
\end{eqnarray*}
We can now apply the decision procedure defined in Section~\ref{sect-sat} to prove that $Aux_{\post_{w_1}}\limp Aux$, i.e., $Aux_{\post_{w_1}}\land\lnot Aux$ is unsatisfiable. 
Instead of doing this proof, we give some hints about the validity of this implication.
First, we remark that by projecting color variables $c_{1,x_1}$ and $c_{1,x_2}$ the $\mathit{Ids}$ sub-formula of $Aux$ is implied by the sub-formula $(\forall t\sep \delta_1(t) \ge 0 \lor (t=y_1) \lor (t=y_2))$ and the constraints on $\delta_1(y_1)$ and $\delta_1(y_2)$:
\begin{eqnarray*}
\lefteqn{Aux_{\post_{w_1}}}\\ &\! =\! & G_x\land RW_r \land RF \land \lnot(\exists z\in r \sep true) \land \\
& & \exists y_1\in w2\sep\exists y_2\in w\sep
    \exists c_{2,x_1},c_{2,x_2}\sep \\
& & \phantom{\exists y_1}
	\big(\!
	\begin{array}[t]{l}
	  (\forall t\sep \delta_1(t) \ge 0 \lor (t=y_1) \lor (t=y_2)) \land \delta_1(y_1) \ge 0 \land \delta_1(y_2)\ge 0 \land \\ 
	  (\forall u\in w\sep u=y_2) \land \\
	  \big( \Lra ((\exists u\in w\sep\delta_2(u)\ge 0 \land (u\neq y_2)) \lor c_{2,x_2} \ge 0)\big) \land  \\
	  c_{2,x_2} < 0 \land \delta_2(y_2) \ge 0 \land \delta_2(y_1) = c_{2,x_1} 
	  \end{array} \\
& & \phantom{\exists y_1}\big) 
\end{eqnarray*}
Second, $RW_w$ sub-formula  of $Aux$ is implied by the sub-formula  $\exists y_1\in w_2\sep\exists y_2\in w\sep\dots(\forall u\in w\sep u=y_2) \land \dots \land  \delta_2(y_2) \ge 0\dots$.
Finally, in the context of conjuncts $c_{2,x_2} < 0$ and $(\forall u\in w\sep u=y_2)$, the left member of the equivalence:
$$\big((\exists t\sep (w2(t)\lor w3(t))\land (t\neq y_1)) \Lra ((\exists u\in w\sep\delta_2(u)\ge 0 \land (u\neq y_2)) \lor c_{2,x_2} \ge 0)\big)$$
is false, so we can replace it by $\lnot(\exists t\sep (w2(t)\lor w3(t))\land (t\neq y_1))$ which expresses, as expected, that only one writer (here $y_1$) can be present at the location $w2$.



\subsection{Experimental results}
We have implemented the algorithms for the decision procedure of $\cml$, the $\post$ and $\pre$-image computations, and the inductive invariant checking.

Our prototype tool, implemented in Ocaml, takes as input an invariant $\varphi_{inv}$ in $B(\Sigma_1)$ which is a conjunction of local invariants written in special form (see definition in Section~\ref{sect_special_form}).
Indeed, the invariants are usually conjunctions of formulas, each of them being an assertion which must hold when the control is at some particular location.
Then, it decomposes the inductive invariant checking problem (i.e., $\post(\varphi_{inv})\land\lnot \varphi_{inv}$ is unsatisfiable) in several lemmas, one lemma for each transition of the input $\syst$ model and for each local invariant in $\varphi_{inv}$ which contains places involved in the transition.
For example, the tool generates 70 lemmas for the verification of the inductive invariant for the $RF$ property on the Reader-Writer lock example. 
However, not all lemmas are generated if the decision procedure for $\cml$ returns satisfiable for one of them (which implies that $\varphi_{inv}$ is not an inductive invariant).
The implemented decision procedure for $\cml$ is parameterized by the decision procedure for the logic of colors $\fo(\mathbb{C},\Omega,\Xi)$. Actually, we generate lemmas in the SMTLIB format and we have an interface with most known SMT solvers. Therefore, we can allow as color logic any theory supported by the state of the art SMT solvers. 

Using this prototype, we modeled and verified several parameterized versions for mutual exclusion algorithms. 
The experimental results are given on Table~\ref{tab:exp}.
(The considered models of the Burns and Bakery algorithms use atomic global condition checks over all the processes, although our framework allows in principle the consideration of models where global conditions are checked using non atomic iterations over the set of processes.)
For all these examples, the color logic is the difference logic over integers for which we have used the decision procedure of Yices~\cite{Yices-06}.
For each example, Table \ref{tab:exp}  gives the number of rules of the model, 
the number of conjuncts of the inductive invariant (in CNF),
the number of lemmas generated for the SMT solver, and the global execution time.

\begin{table}
  \begin{tabular}{|l||r|r|r|r|}\hline
    \textit{Algorithm} & \textit{Nb. rules} & \textit{Inv. size} & \textit{SMT Lemmas} & \textit{Time (sec.)} \\\hline\hline
    Burns~\cite{Burns-Lynch-80} & 9 & 6 & 92 & 0.81 \\
    Ticket & 3 & 9 & 28 & 26.23 \\
    Bakery~\cite{Lamport-74} & 3 & 5 & 10 & 0.15 \\
    Dijkstra~\cite{Dijkstra-65} & 11 & 9 & 1177 & 18390.97 \\
    Martin~\cite{Martin-86} & 8 & 7 & 837 & 980.97 \\
    Szymanski~\cite{Szymanski-88} & 9 & 12 & 293 & 1065.1 \\
    Reader-writer lock~\cite{Flanagan-Freund-Qadeer-02} & 6 & 9 & 70 & 2195.68 \\\hline
  \end{tabular}
  \caption{Experimental results.}
  \label{tab:exp}
\end{table}




\section{Conclusion}
\label{sec:concl}
We have presented a framework for reasoning about dynamic/parametric networks of processes manipulating data over infinite domains. We have provided generic models for these systems and a logic allowing to specify their configurations, both being parametrized by a logic on the considered data domain. We have identified a fragment of  this logic having a  decidable satisfiability problem and which is closed under $\post$ and $\pre$ image computation, and we have shown the application of these results in verification.

%

Our framework allows to deal in a uniform way with all classes of systems manipulating infinite data domains with a decidable first-order theory. In this paper, we have considered instantiations of this framework based on logics over integers or reals (which allows to consider systems with numerical variables). Different data domains can be considered in order to deal with other classes of systems such as multithreaded programs where each process (thread) has an unbounded stack (due to procedure calls).  
Our future work includes also the extension of our framework to other classes of systems and features such as dynamic networks of timed processes, networks of processes with broadcast communication, interruptions and exception handling, etc.

\bibliographystyle{alpha}
\bibliography{bib_cmrs,DB,bib_mutex}

\end{document}